\title{Coding for the Lee and Manhattan Metrics \\ with Weighing Matrices}
\author{Tuvi Etzion, Fellow, IEEE, Alexander Vardy, Fellow, IEEE, and Eitan Yaakobi, Member, IEEE%
\thanks{T.~Etzion is with the Computer Science Department, Technion--Israel
Institute of Technology, Haifa 32000, Israel
(e-mail: \texttt{etzion@cs.technion.ac.il}).
Eitan Yaakobi is with the Electrical Engineering Department,
California Institute of Technology, Pasadena, CA 91125, U.S.A.
(e-mail: \texttt{yaakobi@caltech.edu}).
Part of this work was done when
he was with the
Department of Electrical and Computer~Engineering,
at the University of California San Diego,
La Jolla, CA 92093, U.S.A.
Alexander Vardy is with the
Department of Electrical and Computer~Engineering,
the Department of Computer Science and Engineering,
and the Depar\-tment of Mathematics,
all at the University of California San Diego,
La Jolla, CA 92093--0407, U.S.A.
(e-mail: \texttt{avardy@ucsd.edu}).}
\thanks{The material in this paper was presented in part in the 2010 IEEE Information
Theory Workshop, Dublin,
Ireland, August-September 2010.}
\thanks{This work was supported in part by the United States --- Israel
Binational Science~Foundation (BSF), Jerusalem, Israel, under Grant 2006097.}
}
\newcommand{\cH}{{\cal H}}
\newcommand{\cS}{{\cal S}}
\newcommand{\cW}{{\cal W}}
\newcommand{\bfc}{{\boldsymbol c}}
\newcommand{\bfp}{{\boldsymbol p}}
\newcommand{\bfs}{{\boldsymbol s}}
\newcommand{\bfv}{{\boldsymbol v}}
\newcommand{\bfx}{{\boldsymbol x}}
\newcommand{\bfy}{{\boldsymbol y}}
\newcommand{\bfz}{{\boldsymbol z}}
\newcommand{\bfw}{{\boldsymbol \omega}}
\newcommand{\bfe}{\mathbf{e}}
\newcommand{\deff}{\mbox{$\stackrel{\rm def}{=}$}}
\newcommand{\highsup}[1]{\raisebox{0.35ex}{\kern 1pt $\scriptstyle {#1} $}}
\DeclareRobustCommand{\sbinom}{\genfrac[]\z@{}}
\newcommand{\G}[2]{\sbinom{{#1}\kern-1pt}{{#2}\kern-1pt}}
\newcommand{\Gq}[2]{\sbinom{{#1}\kern-1pt}{{#2}\kern-0.5pt}}
\newcommand{\F}{{\mathbb F}}
\newcommand{\C}{{\mathbb C}}
\newcommand{\dS}{{\mathbb S}}
\newcommand{\Z}{{\mathbb Z}}
\newcommand{\R}{{\mathbb R}}
\newcommand{\be}[1]{\begin{equation}\label{#1}}
\newcommand{\ee}{\end{equation}}
\newcommand{\bc}{\begin{center}}
\newcommand{\ec}{\end{center}}
\DeclareMathAlphabet{\mathbfsl}{OT1}{cmr}{bx}{it}
\newcommand{\uuu}{\kern-1pt\mathbfsl{u}\kern-0.5pt}
\newcommand{\vvv}{\kern-1pt\mathbfsl{v}\kern-0.5pt}
\newcommand{\sP}{\script{P}}
\newcommand{\sG}{\script{G}}
\newcommand{\Ps}{\smash{{\sP\kern-2.0pt}_q\kern-0.5pt(n)}}
\newcommand{\sPs}{\smash{{\sP\kern-1.5pt}_q(n)}}
\newcommand{\Gr}{\smash{{\sG\kern-1.5pt}_q\kern-0.5pt(n,k)}}
\newcommand{\Grmk}{\smash{{\sG\kern-1.5pt}_q\kern-0.5pt(n,n-k)}}
\newcommand{\Grdk}{\smash{{\sG\kern-1.5pt}_q\kern-0.5pt(2k,k)}}
\newcommand{\Grekappa}{\smash{{\sG\kern-1.5pt}_q\kern-0.5pt(n,e+1-\kappa)}}
\newcommand{\Grtwoekappa}{\smash{{\sG\kern-1.5pt}_q\kern-0.5pt(n,2e+1-\kappa)}}
\newcommand{\Gremkappa}{\smash{{\sG\kern-1.5pt}_q\kern-0.5pt(n,e-\kappa)}}
\newcommand{\Ptwo}{\smash{{\sP\kern-2.0pt}_2\kern-0.5pt(n)}}
\newcommand{\Ptwom}{\smash{{\sP\kern-2.0pt}_2\kern-0.5pt(m)}}
\newcommand{\Ptwonm}{\smash{{\sP\kern-2.0pt}_2\kern-0.5pt(n+m)}}
\newcommand{\Ptwoa}{\smash{{\sP\kern-2.0pt}_2\kern-0.5pt(1)}}
\newcommand{\Ptwob}{\smash{{\sP\kern-2.0pt}_2\kern-0.5pt(2)}}
\newcommand{\Ptwoc}{\smash{{\sP\kern-2.0pt}_2\kern-0.5pt(3)}}
\newcommand{\Ptwod}{\smash{{\sP\kern-2.0pt}_2\kern-0.5pt(4)}}
\newcommand{\Ptwoe}{\smash{{\sP\kern-2.0pt}_2\kern-0.5pt(5)}}
\newcommand{\Ptwokm}{\smash{{\sP\kern-2.0pt}_2\kern-0.5pt(2k-1)}}
\newcommand{\Gn}{\smash{{\sG\kern-1.5pt}_2\kern-0.5pt(n,n{-}1)}}
\newcommand{\Gnq}{\smash{{\sG\kern-1.5pt}_q\kern-0.5pt(n,n{-}1)}}
\newcommand{\Gone}{\smash{{\sG\kern-1.5pt}_2\kern-0.5pt(n,1)}}
\newcommand{\GTwo}{\smash{{\sG\kern-1.5pt}_2\kern-0.5pt(n,k)}}
\newcommand{\Gnk}{\smash{{\sG\kern-1.5pt}_2\kern-0.5pt(n,n{-}k)}}
\newcommand{\Pone}{\smash{{\sP\kern-2.5pt}_2\kern-0.5pt(n{-}1)}}
\newcommand{\Greone}{\smash{{\sG\kern-1.5pt}_q\kern-0.5pt(n,e{+}1)}}
\newcommand{\Gretwo}{\smash{{\sG\kern-1.5pt}_q\kern-0.5pt(n,e{+}2)}}
\newcommand{\zero}{{\mathbf 0}}
\newcommand{\myboxplus}{\kern1pt\mbox{\small$\boxplus$}}
\renewcommand{\leq}{\leqslant}
\renewcommand{\geq}{\geqslant}
\newcommand{\script}[1]{{\mathscr #1}}
\newcommand{\Cref}[1]{Co\-rol\-la\-ry\,\ref{#1}}
\theoremstyle{plain}
\newtheorem{thm}{Theorem$\!$}
\newenvironment{theorem}{\begin{thm}\hspace*{-1ex}{\bf.}}{\end{thm}}
\newtheorem{prop}[thm]{Proposition$\!$}
\newtheorem{lem}[thm]{Lemma$\!$}
\newenvironment{lemma}{\begin{lem}\hspace*{-1ex}{\bf.}}{\end{lem}}
\newtheorem{cor}[thm]{Corollary$\!$}
\newtheorem{conj}[thm]{Conjecture$\!$}
\newtheorem{remark}[thm]{Remark$\!$}
\newtheorem{defi}{Definition$\!$}
\newtheorem{exam}{Example$\!$}
\newenvironment{example}{\begin{exam}\hspace*{-1ex}{\bf .}}{\end{exam}}
\begin{document}

\maketitle

\begin{abstract}
This paper has two goals. The first one is to discuss good codes for packing problems
in the Lee and Manhattan metrics. The second one is to consider weighing matrices
for some of these coding problems.
Weighing matrices were considered as building blocks for
codes in the Hamming metric in various constructions.
In this paper we will consider mainly two types of weighing matrices,
namely conference matrices and Hadamard matrices, to construct
codes in the Lee (and Manhattan) metric. We will show
that these matrices have some desirable properties when considered
as generator matrices for codes in these metrics. Two related
packing problems will be considered. The first one is to find good codes
for error-correction (i.e. dense packings of Lee spheres).
The second one is to transform the space in a way that
volumes are preserved and each Lee sphere
(or conscribed cross-polytope), in the space,
will be transformed into a shape inscribed in a small cube.
\end{abstract}
\begin{IEEEkeywords}
Conference matrices, cross-polytopes, Hadamard matrices, Lee metric,
Lee spheres, Manhattan metric, space transformatsion, weighing matrices.
\end{IEEEkeywords}


\section{Introduction}


\PARstart{T}{he} Lee metric was introduced
in~\cite{Lee58,Ulr57} for transmission of signals taken from
GF($p$) over certain noisy channels. It was generalized for $\Z_m$
in~\cite{GoWe}. The Lee distance $d_L (X,Y)$ between two words
$X=(x_1,x_2,\ldots,x_n)$, ${Y=(y_1,y_2,\ldots,y_n) \in \Z_m^n}$ is
given by $d_L(X,Y) \deff \Sigma_{i=1}^n \min \{ x_i -y_i~(\bmod~m) , y_i -x_i~
(\bmod~m) \}$. A related metric, the Manhattan metric, is defined
for alphabet letters taken from the integers. For two words
${X=(x_1,x_2,\ldots,x_n)}$, $Y=(y_1,y_2,\ldots,y_n) \in \Z^n$ the
Manhattan distance between $X$ and $Y$, $d_M(X,Y)$, is defined as ${d_M (X,Y) \deff
\Sigma_{i=1}^n | x_i - y_i |}$. A code~$\C$ in either metric has
minimum distance $d$ if for each two distinct codewords $c_1 , c_2
\in \C$ we have $d(c_1,c_2) \geq d$, where $d( \cdot , \cdot)$
stands for either the Lee distance or the Manhattan distance.

The main goal of this paper is to explore the properties of
some interesting dense codes in the Lee and
Manhattan metrics. Two related
packing problems will be considered.
The first one is to find good codes
for error-correction (i.e. dense packings of Lee spheres) in
the Lee and Manhattan metrics.
The second one is to transform the space in such a way that
volumes of shapes are preserved and each Lee sphere
(or conscribed cross-polytope), in the space,
will be transformed to a shape inscribed
in a small cube.
Some interesting connections between these two
problems will be revealed in this paper.

An {\it $n$-dimensional Lee sphere} $S_{n,R}$, with radius $R$, is
the shape centered at $(0, \ldots ,0)$ consisting of
all the points $(x_1 , x_2 , ... , x_n ) \in \Z^n$ which satisfy

$$\sum_{i=1}^n | x_i| \leq R~.$$

Similarly, an \emph{$n$-dimensional
cross-polytope} is the set consisting of all the points
$(x_1 , \ldots , x_n ) \in \R^n$ which satisfy the equation
$$
\sum_{i=1}^n | x_i | \leq 1~.
$$

A Lee sphere, $S_{n,R}$, centered at a point
$(y_1 , \ldots , y_n ) \in \Z^n$,
contains all the points of $\Z^n$ whose Manhattan distance from $(y_1 ,
\ldots , y_n )$ is at most $R$. The size of $S_{n,R}$ is well
known~\cite{GoWe}:
\begin{equation*}
| S_{n,R} | = \sum_{i=0}^{\min\{n,R\}} 2^i {n \choose i}{R \choose
i}
\end{equation*}\\
A code with minimum distance $d=2R+1$ (or $d=2R+2$) is a packing of
Lee spheres with radius $R$. Asymptotically,
the size of an $n$-dimensional Lee sphere with radius $R$
is $\frac{(2R)^n}{n!} + O(R^{n-1})$, when $n$ is fixed
and $R \longrightarrow \infty$.

The research on codes in the Manhattan metric
is not extensive.
It is mostly concerned with the existence and nonexistence of
perfect codes~\cite{GoWe,Post,Ast82B,Hor09}. Nevertheless, all
codes defined in the Lee metric over some finite alphabet,
(subsets of $\Z_m^n$) can be
extended to codes in the Manhattan metric over
the integers (subsets of $\Z^n$). The
literature on codes in the Lee metric is very extensive,
e.g.~\cite{RoSi94,ChWo71,GoWe,Sat79,Nak79,Ast82,Or93}.
Most of the interest at the beginning was in the existence
of perfect codes in these metrics.
The interest in
Lee codes increased in the last decade due to many new
applications of these codes. Some examples are
constrained and partial-response channels~\cite{RoSi94},
interleaving schemes~\cite{BBV}, orthogonal frequency-division
multiplexing~\cite{Sch07}, multidimensional
burst-error-correction~\cite{EtYa09}, and error-correction in the rank modulation
scheme for flash memories~\cite{JSB10}. The increased interest is also due
to new attempts to settle the existence question of perfect codes
in these metrics~\cite{Hor09}.

Linear codes are usually the codes which can be handled more
effectively and hence we will consider
only linear codes throughout this paper.

A linear code in $\Z^n$ is an integer lattice. A {\it lattice}
$\Lambda$ is a discrete, additive subgroup of the real $n$-space
$\R^n$,
\begin{equation}
\label{eq:lattice_def} \Lambda = \{ u_1 \bfv_1 + u_2 \bfv_2 + \cdots +
u_n \bfv_n ~:~ u_1,u_2, \cdots,u_n \in \Z \}~,
\end{equation}
where $\{ \bfv_1,\bfv_2,\ldots,\bfv_n \}$ is a set of linearly independent
vectors in $\R^n$. A lattice $\Lambda$ defined by
(\ref{eq:lattice_def}) is a sublattice of $\Z^n$ if and only if
$\{ \bfv_1,\bfv_2,\ldots,\bfv_n \} \subset \Z^n$. We will be interested
solely in sublattices of $\Z^n$. The vectors $\bfv_1,\bfv_2,\ldots,\bfv_n$
are called {\it basis} for $\Lambda \subseteq \Z^n$, and the $n
\times n$ matrix
$$
{\bf G}=\left[\begin{array}{cccc}
v_{11} & v_{12} & \ldots & v_{1n} \\
v_{21} & v_{22} & \ldots & v_{2n} \\
\vdots & \vdots & \ddots & \vdots\\
v_{n1} & v_{n2} & \ldots & v_{nn} \end{array}\right]
$$
having these vectors as its rows is said to be a {\it generator
matrix} for $\Lambda$. The lattice with generator matrix ${\bf
G}$ is denoted by $\Lambda ({\bf G})$.


The {\it volume} of a lattice $\Lambda$, denoted $V( \Lambda )$,
is inversely proportional to the number of lattice points per unit
volume. More precisely, $V( \Lambda )$ may be defined as the
volume of the {\it fundamental parallelogram} $\Pi(\Lambda)$,
which is given by
$$
\Pi(\Lambda) \deff\ \{ \xi_1 \bfv_1 + \xi_2 \bfv_2 + \cdots + \xi_n
\bfv_n~:~ 0 \leq \xi_i < 1,~ 1 \leq i \leq n \}
$$
There is a simple expression for the volume of $\Lambda$, namely,
$V(\Lambda)=| \det {\bf G} |$. An excellent reference,
for more material on lattices and some
comparison with our results, is~\cite{CoSl88}.

Sublattices of $\Z^n$ are periodic. We say that
the lattice $\Lambda$ has period $(m_1,m_2,\ldots,m_n) \in \Z^n$ if for
each $i$, ${1 \leq i \leq n}$, the point $(x_1,x_2,\ldots,x_n) \in
\Z^n$ is a lattice point in $\Lambda$ if and only if ${(x_1,\ldots,x_{i-1},x_i+m_i,
x_{i+1}, \ldots , x_n ) \in \Lambda}$. Let $m$ be the least common
multiple of the integers $m_1,m_2,\ldots,m_n$. The lattice $\Lambda$ has
also period $(m,m, \ldots , m)$ and it can be reduced
to a code $\C$ in the Lee metric over the alphabet~$\Z_m$.
It is easy to verify that the size of the code $\C$ is
$\frac{m^n}{V(\Lambda)}$. The
minimum distance of $\C$ can be the same as the minimum distance
of $\Lambda$, but it can be larger (for example, most
binary codes of length $n$ can be reduced from a sublattice of $\Z^n$,
where their Manhattan distance is at most 2. This is
the inverse of Construction A~\cite[p. 137]{CoSl88}).

One should note that if the lattice $\Lambda$ in the Manhattan metric
is reduced to a code over $\Z_p$, $p$ prime, in the Lee metric,
then the code is over a finite field. But, usually
the code in the Lee metric is over a ring which is not a field. It
makes its behavior slightly different from a code over a finite field.
Codes over rings were extensively studied in the last twenty years,
see e.g.~\cite{AhSo08,CaSl95,Car98,HKCSS}, and references therein.
In our discussion, a few concepts are important and for codes over~$\Z_m$ these are
essentially the same as the ones in traditional codes over a finite
field. For example, the minimum distance of the code
is the smallest distance between two codewords. The minimum distance
is equal to the weight of the word with minimum Manhattan (Lee) weight.

The definition of a coset for a lattice $\Lambda$ is very simple.
Let $\Lambda$ be a sublattice of $\Z^n$ and $\bfx \in \Z^n$.
The \emph{coset} of $\bfx$ is $\bfx + \Lambda \deff \{ \bfx+\bfc  \ | \
\bfc \in \Lambda \}$. The set of cosets is clearly unique.
For each coset we choose a \emph{coset leader}, which is a point in
the coset with minimum Manhattan weight. If there are a few points
with the same minimum Manhattan weight we choose one of them
(arbitrarily) as the coset leader. Once a set of coset leaders
is chosen then each point $\bfx \in \Z^n$ has a unique representation
as $\bfx = \bfc + \bfs$, where $\bfc$ is a lattice
point of $\Lambda$ and~$\bfs$ is
a coset leader. The number of different cosets
is equal to the volume of the lattice $\Lambda$.
In this context, the
covering radius of a lattice~$\Lambda$ (respectively a code $\C$)
is the distance of the word $\bfx$ whose
distance from the lattice (respectively code) is
the highest among all words. It equals to the weight
of the coset leader with the largest weight.
The covering radius of a lattice $\Lambda$ is the same as
the covering radius of the code
$\C$ reduced from $\Lambda$ to $\Z_m$, where $m$ is the period
of $\Lambda$.

A \emph{weighing matrix} $\cW_{n,w}$ of order $n$ and
weight $w$ is an $n \times n$ matrix over the alphabet
$\{ 0,1,-1 \}$ such that each row and column has
exactly $w$ nonzero entries; and $\cW \cdot \cW^T = wI_n$, where $I_n$ is the
identity matrix of order $n$.
The most important families of weighing matrices are the Hadamard matrices
in which $w=n$, and the conference matrices in which $w=n-1$.
In most of the results in this paper these families are considered.
Our construction in Section~\ref{sec:transform} will use
weighing matrices with some symmetry.
A weighing matrix $\cW$ is
\emph{symmetric} if $\cW^T = \cW$ and \emph{skew symmetric}
if $\cW^T = -\cW$\footnote{there is a generalization for this
definition for skew symmetric Hadamard matrices
(see~\cite[p. 89]{GeSe79}), but this generalization is
not considered in our paper.}. Information on weighing
matrices can be found for example in~\cite{CoDi07,GeSe79}.

In this paper we examine lattices and
codes related to weighing matrices. We prove that the
minimum Manhattan (respectively Lee) distance of the lattice (respectively code) derived
from a generator matrix taken as a weighing matrix of weight $w$,
is $w$. We discuss properties of Reed-Muller like codes,
i.e. based on Sylvester Hadamard matrices,
in the Lee and the Manhattan metrics.
These codes were used before for power control
in orthogonal frequency-division multiplexing
transmission. We prove bounds on their
covering radius and extend their range
of parameters. We define transformations
which transform $\R^n$ to $\R^n$ (respectively $\Z^n$ to $\Z^n$), in which
each conscribed cross-polytope (respectively Lee sphere) in $\R^n$ (respectively~$\Z^n$),
is transformed into a shape which can be
inscribed in a relatively small cube. The transformations
will preserve the volume of the shape and we believe that they are optimal
in the sense that there are no such transformations
which preserve volume and transform conscribed
cross-polytopes (respectively Lee spheres) into smaller cubes.
Generalization of the transformations yield some interesting lattices and codes which
are related to the codes based on Sylvester Hadamard matrices.

The rest of the paper is organized as follows.
In Section~\ref{sec:weighing} we discuss the use of weighing
matrices as generator matrices for codes (respectively lattices)
in the Lee (respectively Manhattan) metric. We will prove some properties
of the constructed codes (respectively lattices), their size, minimum distance, and on
which alphabet size they should be considered for the Lee metric.
In Section~\ref{sec:double} we will construct codes related to
the doubling construction of Hadamard matrices. We will discuss
their properties and also their covering radius.
In Section~\ref{sec:transform} we present the
volume preserving transformations which transform
each conscribed cross-polytope (respectively Lee sphere) in $\R^n$ (respectively $\Z^n$),
into a shape which can be
inscribed in a relatively small cube. These transformations are part
of a large family of transformations based on weighing
matrices and they will also yield some interesting codes.
Some connections to the codes obtained in Section~\ref{sec:double} will be discussed.
In Section~\ref{sec:exist}, the existence of weighing matrices
needed for our constructions will be discussed.
Conclusions and problems for future research are given in
Section~\ref{sec:conclude}.

\section{Weighing Matrices Codes}
\label{sec:weighing}

This section is devoted to codes whose generator
matrices are weighing matrices. We will discuss
some basic properties of such codes.

Each weighing matrix can be written in a \emph{normal form} such that its
first row consists only of \emph{zeroes} and +1's,
where all the \emph{zeroes} precede the +1's.
For this we only have to negate and permute columns.
We will now
consider weighing matrices written in normal form, unless
we will apply some operations on the original matrix
in normal form and obtain
one which is not in normal form.
We note that a weighing matrix $\cW_1$ is \emph{equivalent} to
a weighing matrix $\cW_2$ if $\cW_1$ is obtained from $\cW_2$
by permuting rows and columns, and/or negating rows and columns.

In the sequel, let $\mathbf{e}_i$ denote the unit vector with an \emph{one}
in the $i$-th coordinate, let $\bf{0}$ denote the all-zero vector,
and let $\bf{1}$ denote the all-one vector.

\begin{theorem}
\label{thm:LeeHad}
Let $\cW$ be a weighing matrix of order $n$ and weight~$w$
and let $\Lambda (\cW )$ be the corresponding lattice.
\begin{itemize}
\item The minimum Manhattan distance of $\Lambda (\cW )$ is $w$.

\item The volume of $\Lambda (\cW)$ is $w^{n \over
2}$.

\item $\Lambda (\cW)$ can be reduced to a code $\C$ of length $n$, in the Lee
metric, over the alphabet $\Z_w$. The minimum Lee distance of $\C$ is $w$.
\end{itemize}
\end{theorem}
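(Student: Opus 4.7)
The plan is to derive all three statements from the single defining identity $\cW \cW^T = w I_n$ (equivalently $\cW^T \cW = w I_n$, since $\cW$ is invertible once this holds). The volume is quickest: $V(\Lambda(\cW)) = |\det \cW|$, and squaring gives $(\det \cW)^2 = \det(w I_n) = w^n$, so $V(\Lambda(\cW)) = w^{n/2}$.

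For the minimum Manhattan distance I would prove the two directions separately. The upper bound is immediate, because each row of $\cW$ is a lattice vector with exactly $w$ nonzero entries, all equal to $\pm 1$, and hence of Manhattan weight $w$. For the matching lower bound, given any nonzero lattice point $\bfx = \bfu \cW$ with $\bfu \in \Z^n \setminus \{\zero\}$, the trick is to right-multiply by $\cW^T$: $\bfx \cW^T = \bfu \cW \cW^T = w \bfu$. Since every entry of $\cW^T$ lies in $\{0, \pm 1\}$, one has $\|\bfx \cW^T\|_\infty \leq \|\bfx\|_1$ coordinate-by-coordinate; combined with $\|w \bfu\|_\infty = w \|\bfu\|_\infty \geq w$ (the last step using that $\bfu$ is a nonzero integer vector), this yields $\|\bfx\|_1 \geq w$.

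For the third statement the key step is to show $w \Z^n \subseteq \Lambda(\cW)$, i.e.\ that $w \bfe_i \in \Lambda(\cW)$ for every $i$. This drops out of $\cW^T \cW = w I_n$: its $i$-th row reads $w \bfe_i = \sum_{k} \cW_{ki} \bfv_k$, an integer combination of the rows $\bfv_k$ of $\cW$. Hence $\Lambda(\cW)$ has period $(w, \ldots, w)$ and reduces to a code $\C \subseteq \Z_w^n$; moreover this period is tight, since a strictly smaller coordinatewise period $d$ would yield $d \bfe_i \in \Lambda(\cW)$, contradicting the minimum Manhattan distance just proven. It then remains to identify the minimum Lee distance of $\C$: every nonzero codeword lifts, coordinate by coordinate with representatives of smallest absolute value, to a nonzero integer vector whose Manhattan weight equals the Lee weight of the codeword, and this lift lies in $\Lambda(\cW)$ precisely because $w \Z^n \subseteq \Lambda(\cW)$. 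Thus the Lee distance inherits the lower bound $w$, and the reduced rows of $\cW$ attain it.

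I do not anticipate a serious obstacle beyond spotting the step $\bfx \cW^T = w \bfu$ in the minimum-distance argument; the inequality $\|\bfx \cW^T\|_\infty \leq \|\bfx\|_1$ is purely formal once one uses that a weighing matrix has entries in $\{0, \pm 1\}$, and every other claim reduces to direct manipulation of $\cW \cW^T = w I_n$.
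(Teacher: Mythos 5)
Your proof is correct and covers all three items. The one place where you genuinely depart from the paper is the lower bound on the minimum Manhattan distance: the paper picks a single row $\bfy$ appearing with coefficient $\rho\ge 1$ in the combination producing $\bfx$, negates columns so that $\bfy$ becomes a $0/1$ row supported on its last $w$ coordinates, and observes that by orthogonality the entries of the transformed $\bfx$ on that support sum to $\rho w$, whence $\|\bfx\|_1\ge \rho w\ge w$. Your identity $\bfx\cW^T=w\bfu$ combined with $\|\bfx\cW^T\|_\infty\le\|\bfx\|_1$ is the same H\"older-type pairing of $\bfx$ against the rows of $\cW$, but carried out globally over all rows at once; it avoids the column-negation and the need to single out a row, and is the cleaner formulation. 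On the third item you are in fact more complete than the paper: the paper only remarks that the minimum-weight lattice points survive as codewords of Lee weight $w$, whereas you also establish the lower bound by lifting an arbitrary nonzero codeword, coordinatewise with least-absolute-value representatives, to a nonzero lattice point of equal Manhattan weight, which lies in $\Lambda(\cW)$ precisely because $w\Z^n\subseteq\Lambda(\cW)$. The volume computation and the identity $w\bfe_i=\sum_k \cW_{ki}\bfv_k$ coincide with the paper's argument.
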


\begin{proof}
\begin{itemize}
\item
The minimum distance of $\Lambda (\cW)$ is the
weight of the nonzero lattice point
of minimum Manhattan weight.

Let $\bfx \in \Lambda (\cW)$ be a nonzero
lattice point of minimum Manhattan weight.
The point $\bfx$ is obtained by a linear combination of a few
rows from $\cW$.

Let $\bfy$ be a row which is included in this linear
combination with a coefficient $\rho$, $\rho \geq 1$,
in this sum. Let $\cW'$ be the matrix obtained
from $\cW$ by negating the columns in which $\bfy$ has $-1$'s.

Let $\bfx'$ be a lattice point in $\Lambda (\cW')$
formed
from the linear combination of the same related
rows of $\cW'$, as those from which $\bfx$ was
formed from $\cW$.

The point $\bfx'$ has minimum Manhattan weight in $\Lambda (\cW')$
(the same Manhattan weight as $\bfx$).
The related row~$\bfy'$
has only \emph{zeroes} and \emph{ones}
and without loss of generality we can assume that the \emph{ones}
are in the last $w$ coordinates. $\bfy'$ is included
$\rho$ times in the linear combination from which $\bfx'$
is obtained. Hence, the total sum of the elements
in these last $w$ entries of $\bfx'$
is $\rho w$ (since in a row which is not $\bfy'$ the
sum of the entries in the last $w$ coordinates
is \emph{zero}). Thus, the Manhattan weight of $\bfx'$
is at least~$\rho w$.

All the rows in $\cW$ have Manhattan weight $w$
and hence the minimum Manhattan distance
of $\Lambda (\cW)$ is exactly~$w$.

\item The volume of $\Lambda (\cW)$ is the determinant of $\cW$ known to be
$w^{n \over 2}$ and this is easily inferred from the definition
of a weighing matrix.

\item Let $\bfw^{(i)}$ be row $i$ of $\cW$ and let
$\omega^{(i)}_j$ be the $j$-th entry in this row.
Clearly, $\sum_{i=1}^n \omega^{(i)}_j \bfw^{(i)} = w \cdot \mathbf{e}_j$
(since the $j$-th column is orthogonal to
all the columns of $\cW$ except
itself). Thus, $\Lambda (\cW)$ can be reduced to a code $\C$
of length $n$, in the Lee metric over, the alphabet $\Z_w$.
The lattice points of minimum Manhattan weight $w$ are also codewords
in $\C$ (where $-1$ is replaced by $w-1$)
and hence the minimum Lee distance of $\C$ is also $w$.
\end{itemize}
\end{proof}

A code is called \emph{self-dual} if it equals its dual.
Since the inner product of two rows from a weighing matrix
$\cW$ is either 0 or $w$, it follows that the code $\C$ reduced
from $\Lambda (\cW)$ is contained in its dual. Since the size of the
code is $w^{\frac{n}{2}}$ and the size of the space is $w^n$,
it follows that the dual code has also size~$w^{\frac{n}{2}}$.
Thus, we have
\begin{theorem}
\label{thm:selfdual}
Let $\cW$ be a weighing matrix of order $n$ and weight~$w$.
If $\C$ is the code over $\Z_w$ reduced from $\Lambda (\cW)$
then $\C$ is a self-dual code.
\end{theorem}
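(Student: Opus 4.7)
The plan is to establish $\C \subseteq \C^\perp$ and then close the argument by comparing cardinalities, exactly as foreshadowed in the paragraph preceding the theorem. The inner product here is the usual dot product reduced modulo $w$, which is the standard notion of duality for codes over the ring $\Z_w$.

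First I would recall from the defining identity $\cW \cW^T = w I_n$ that any two rows $\bfw^{(i)}, \bfw^{(j)}$ of $\cW$ satisfy $\inner{\bfw^{(i)}}{\bfw^{(j)}} \in \{0, w\}$: it equals $w$ when $i=j$ and $0$ otherwise. In particular, reducing modulo $w$, every pair of generator rows (including a row paired with itself) is orthogonal in $\Z_w$. Since $\C$ is generated as a $\Z_w$-module by these reduced rows, any codeword is a $\Z_w$-linear combination of them, and bilinearity of the inner product then gives $\inner{\bfc_1}{\bfc_2} \equiv 0 \pmod{w}$ for every pair $\bfc_1, \bfc_2 \in \C$. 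Hence $\C \subseteq \C^\perp$.

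Next I would count. By \Tref{thm:LeeHad}, $V(\Lambda(\cW)) = w^{n/2}$, and the reduced code has size $|\C| = w^n / V(\Lambda(\cW)) = w^{n/2}$. For any linear code over $\Z_w$ the standard cardinality relation $|\C| \cdot |\C^\perp| = w^n$ holds (this follows from viewing $\Z_w^n$ as a finite abelian group self-dual under the pairing $(\bfx,\bfy) \mapsto \zeta^{\inner{\bfx}{\bfy}}$ with $\zeta$ a primitive $w$-th root of unity, so that $\C^\perp$ is the annihilator of $\C$ and has index $|\C|$). Therefore $|\C^\perp| = w^{n/2} = |\C|$, and combined with $\C \subseteq \C^\perp$ this forces $\C = \C^\perp$.

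The only point that requires mild care, and is the closest thing to an obstacle, is justifying the identity $|\C| \cdot |\C^\perp| = w^n$ in the ring setting, since $\Z_w$ need not be a field. This is, however, a standard fact for codes over $\Z_w$ (via Pontryagin/character duality on the finite abelian group $\Z_w^n$), so no serious calculation is needed; everything else is a direct application of \Tref{thm:LeeHad} and the bilinearity of the inner product modulo $w$.
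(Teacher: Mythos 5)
Your proof is correct and follows essentially the same route as the paper: the paragraph preceding the theorem argues $\C \subseteq \C^\perp$ from the fact that any two rows of $\cW$ have inner product $0$ or $w$, and then concludes by comparing the cardinality $w^{n/2}$ of $\C$ with that of its dual. Your only addition is the explicit justification of $|\C|\cdot|\C^\perp| = w^n$ over the ring $\Z_w$ via character duality, a point the paper takes for granted.
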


Self-dual codes were considered extensively in coding theory,
e.g.~\cite{RaSl98}.
The lattice $\Lambda'$ is the \emph{dual} of the lattice~$\Lambda$
if~$\Lambda'$ contains all the points in $\R^n$ whose inner product with
the lattice points of~$\Lambda$ is an integer. We are not interested in dual
lattices as the related
lattice points have usually some non-integer entries.

Let $A$ be an $n \times n$ matrix over $\Z_k$. The \emph{rank}
of $A$ over $\Z_k$ is defined to be the maximum number
of linearly independent rows of $A$ over $\Z_k$.

\begin{theorem}
\label{thm:rank}
The rank of a Hadamard matrix of order $n$ over $\Z_n$
is $n-1$.
\end{theorem}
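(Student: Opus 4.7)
The plan is to prove the two inequalities $\mathrm{rank}(H) \leq n-1$ and $\mathrm{rank}(H) \geq n-1$ separately.

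For the upper bound I would use the identity $\sum_{i=1}^{n} \omega^{(i)}_j \bfw^{(i)} = n\bfe_j$ derived in the proof of \Tref{thm:LeeHad} (valid for each column index $j$). Reduced modulo~$n$, this is a non-trivial $\Z_n$-linear relation among the rows whose coefficients $\omega^{(i)}_j = \pm 1$ are units in $\Z_n$. Solving for any chosen row expresses it as a $\Z_n$-linear combination of the remaining $n-1$ rows, so the rows of $H$ are dependent and $\mathrm{rank}(H) \leq n-1$.

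For the lower bound the natural tool is the Smith normal form of $H$ over $\Z$: there exist $U,V \in GL_n(\Z)$ with $UHV = \mathrm{diag}(d_1,\ldots,d_n)$ and $d_1\mid d_2\mid\cdots\mid d_n$. Three observations about the invariant factors are needed. First, $d_1 = 1$ is simply the gcd of the $\pm 1$ entries of $H$. Second, $d_2 = 2$, because every $2\times 2$ minor of $H$ lies in $\{0,\pm 2\}$ and at least one is nonzero. Third, and most delicately, $d_i\,d_{n+1-i} = n$ for each $i$; this follows from the relation $H^T = nH^{-1}$ (equivalent to $HH^T = nI$) together with the uniqueness of Smith form, since the Smith form of $nH^{-1}$ has diagonal entries $(n/d_n,\ldots,n/d_1)$ while that of $H^T$ equals $\mathrm{diag}(d_1,\ldots,d_n)$ by transposition. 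In particular $d_n = n$, $d_{n-1} = n/d_2 = n/2 < n$, and $d_i \mid n/2$ for all $i \leq n-1$.

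Consequently, when $H$ is reduced modulo $n$, its Smith form has exactly one zero diagonal entry, and the reduced code $\cC$ is isomorphic to $\bigoplus_{i=1}^{n-1}\Z/(n/d_i)\Z$, a direct sum of $n-1$ non-trivial cyclic groups. Since each $n/d_i$ is even for $i \leq n-1$, the quotient $\cC/2\cC$ is an $\Ftwo$-vector space of dimension $n-1$, forcing any $\Z_n$-generating set of $\cC$ to contain at least $n-1$ rows. Combined with the upper bound, this yields $\mathrm{rank}(H) = n-1$.

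The chief technical hurdle is the duality identity $d_i\,d_{n+1-i} = n$: every other step uses only the $\pm 1$ nature of the entries, but here one must invoke the full strength of the Hadamard relation $HH^T = nI$ in order to rule out the possibility that more than one invariant factor of $H$ equals $n$.
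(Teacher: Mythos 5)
Your proof is correct, and it takes a genuinely different route from the paper's. The paper obtains the lower bound from self-duality (\Tref{thm:selfdual}): since $\cC$ is self-dual over $\Z_n$, the vectors $\frac{n}{2}(\bfe_i+\bfe_n)$ lie in $\cC$, which exhibits an explicit sublattice needing $n-1$ generators; its upper bound is the one-line observation that a generator-matrix row of Hamming weight one would have to be $n\cdot\bfe_i\equiv\zero$. You instead read the upper bound directly off the column-orthogonality relation $\sum_i \omega^{(i)}_j\bfw^{(i)}=n\bfe_j$ (the same underlying identity $\cH^T\cH=nI_n$ that powers the paper's duality argument, but used as an explicit dependency with unit coefficients), and you get the lower bound from the Smith normal form via the reciprocity $d_i\,d_{n+1-i}=n$. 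All three of your invariant-factor claims check out: $d_1=1$ from the $\pm1$ entries; $d_1d_2=\gcd$ of the $2\times2$ minors $=2$; and the reciprocity follows from $\cH^T=n\cH^{-1}$ together with uniqueness of the Smith form. The passage to $\cC/2\cC$ then correctly converts ``$n-1$ nontrivial, even invariant factors'' into ``every generating set has at least $n-1$ elements.'' Your route costs more machinery but buys strictly more, since it pins down the full group structure $\cC\cong\bigoplus_{i=1}^{n-1}\Z/(n/d_i)\Z$ rather than just a generator count. One caveat applies equally to your proof and the paper's: the paper literally defines rank as the maximum number of linearly independent rows over $\Z_n$, and under that reading the statement fails --- for any two rows $\bfr,\bfr'$ of a Hadamard matrix of order $n\geq 4$ one has $\frac{n}{2}\bfr+\frac{n}{2}\bfr'\equiv\zero\pmod n$, so no two rows are independent. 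What both arguments actually establish (yours explicitly, the paper's implicitly) is that the minimum number of generators of the row module is $n-1$, which is clearly the intended meaning, as the proof of \Tref{thm:cnf_p} confirms.
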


\begin{proof}
Let $\cH$ be a Hadamard matrix of order $n$. By Theorem~\ref{thm:selfdual}
we have that
the code $\C$ reduced from $\Lambda (\cH)$, to the Lee metric
over $\Z_n$, is a self-dual code. Therefore, it is easy to verify
that the words of length $n$ with exactly two nonzero
entries equal $\frac{n}{2}$ are codewords of $\C$.
Consider the generator matrix
$$
G=\left[\begin{array}{cccccc}
\frac{n}{2} & 0 & 0 & \ldots & 0 & \frac{n}{2} \\
0 & \frac{n}{2} & 0 & \ldots & 0 & \frac{n}{2} \\
\vdots & \vdots & \vdots & \ddots & \vdots & \vdots\\
0 & 0 & 0 & \ldots  & \frac{n}{2} & \frac{n}{2} \\
0 & 0 & 0 & \ldots & 0 & n \end{array}\right]~;
$$
the lattice $\Lambda (G)$ is a sub-lattice of
$\Lambda (\cH)$. Thus, the rank of the code $\C$ reduced
from $\Lambda (\cH)$ has rank at least $n-1$.
If the rank of $\C$ is $n$ then it has at least one
row with Hamming weight one in each generator matrix. But this row
can only be $n \cdot \mathbf{e}_i$, for some $i$, $1 \leq i \leq n$.
It implies that the
rank of $\C$ is less than $n$, i.e. $n-1$.
\end{proof}

\begin{theorem}
\label{thm:cnf_p}
If $\cW$ is a conference matrix of order $n=p+1$,
where $p$ is a prime, then its rank over $\Z_p$ (also $\F_p$) is
$\frac{p+1}{2}$.
\end{theorem}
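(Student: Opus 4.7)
My plan is to deduce the rank from Theorem~\ref{thm:selfdual} once the $\F_p$-row span of $\cW$ has been identified with the reduced code $\C$.

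I would begin with the upper bound. Since $\cW$ is a conference matrix of order $n=p+1$ and weight $w=n-1=p$, the defining identity $\cW\cW^T = pI_n$ reduces modulo $p$ to $\cW\cW^T \equiv 0$. Hence every row of $\cW$ is orthogonal to every other row and to itself over $\F_p$, so the $\F_p$-row span of $\cW$ is a self-orthogonal subspace of $\F_p^n$ and has dimension at most $n/2 = (p+1)/2$.

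For the matching lower bound, I would observe that the proof of Theorem~\ref{thm:LeeHad} already furnished $p\mathbf{e}_j = \sum_{i=1}^n \omega^{(i)}_j \bfw^{(i)} \in \Lambda(\cW)$ for every $j$, so $p\Z^n \subseteq \Lambda(\cW)$. This inclusion makes the reduced code $\C = \Lambda(\cW)/p\Z^n$ coincide with the $\F_p$-row span of $\cW$, whence $\mathrm{rank}_{\F_p}(\cW) = \dim_{\F_p}\C$. Since $p$ is prime, $\Z_p = \F_p$, and by Theorem~\ref{thm:selfdual} the code $\C$ is self-dual over $\F_p$, forcing $\dim\C = n/2 = (p+1)/2$. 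Equivalently, one may read off $|\C| = p^n/V(\Lambda(\cW)) = p^{(p+1)/2}$ directly from the volume computation in Theorem~\ref{thm:LeeHad}. Combined with the upper bound, the rank is exactly $(p+1)/2$.

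There is no substantial technical obstacle: the only subtlety is the bookkeeping step that equates the $\F_p$-row span of $\cW$ with $\C$, which rests on the inclusion $p\Z^n \subseteq \Lambda(\cW)$ already verified in the proof of Theorem~\ref{thm:LeeHad}. Once this identification is in place, Theorem~\ref{thm:selfdual} pins the dimension down exactly, and the Hadamard case treated in Theorem~\ref{thm:rank} plays no role here because the modulus $p = n-1$ is already prime and the naive self-orthogonality bound is tight.
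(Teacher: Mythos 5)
Your proof is correct, but it takes a genuinely different route from the paper's. The paper argues directly on the lattice: since $V(\Lambda(\cW))=p^{(p+1)/2}$ and $p$ is prime, a lower-triangular generator matrix for $\Lambda(\cW)$ has $\frac{p+1}{2}$ ones and $\frac{p+1}{2}$ $p$'s on its diagonal, and the rank over $\Z_p$ is read off from the number of ones. You instead get the upper bound $n/2$ from self-orthogonality of the rows modulo $p$ (reducing $\cW\cW^T=pI_n$), and the matching lower bound by identifying the $\F_p$-row span of $\cW$ with the reduced code $\C$ and invoking its self-duality from Theorem~\ref{thm:selfdual} --- or, equivalently, reading off $|\C|=p^n/V(\Lambda(\cW))=p^{(p+1)/2}$. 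Both arguments ultimately rest on the same determinant computation, but your version is the more careful one: the paper's assertion that the diagonal entries of the triangular form are only $1$'s and $p$'s is precisely the point that needs $p\Z^n\subseteq\Lambda(\cW)$, which you make explicit via $p\,\mathbf{e}_j=\sum_{i=1}^n\omega^{(i)}_j\bfw^{(i)}$ from the proof of Theorem~\ref{thm:LeeHad}. (A minor redundancy: once you know $|\C|=p^{(p+1)/2}$ and $\F_p$ is a field, the dimension is already forced to be $\frac{p+1}{2}$, so the separate self-orthogonality upper bound is not needed --- though it does no harm and mirrors how Theorem~\ref{thm:selfdual} itself is proved.)
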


\begin{proof}
Since the volume of $\Lambda (\cW)$ is $p^\frac{p+1}{2}$ and $p$
is a prime, it follows that in a
lower diagonal matrix representation, the generator matrix of $\Lambda (\cW)$
has a diagonal with $\frac{p+1}{2}$ $p$'s and $\frac{p+1}{2}$ 1's.
Thus, the rank of $\cW$ over~$\Z_p$ is~$\frac{p+1}{2}$.
Since $p$ is a prime, it follows that
the ring~$\Z_p$ is equal to the finite field $\F_p$.
\end{proof}

\begin{conj}
\label{con:conf_MDS}
If $\C$ is a code of length $p+1$ constructed from a generator matrix
which is a conference matrix then $\C$ is an MDS code of dimension
$\frac{p+1}{2}$ and minimum Hamming distance~$\frac{p+3}{2}$.
\end{conj}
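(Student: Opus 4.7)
My plan is to combine the self-duality of the reduced code over $\F_p$ with the classical column-independence characterization of MDS codes. The dimension $k = \frac{p+1}{2}$ is supplied by \Tref{thm:cnf_p}, so Singleton's bound gives $d(\cC) \leq (p+1) - k + 1 = \frac{p+3}{2}$; the content of the conjecture is the matching lower bound, i.e.\ ruling out nonzero codewords of Hamming weight at most $\frac{p+1}{2}$.

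First I would observe that over $\F_p$ the identity $\cW \cW^T = p I \equiv 0$ together with the matched dimensions forces $\cC = \cC^\perp$; this is \Tref{thm:selfdual} read over the field $\F_p$. Consequently, a row-reduced form of $\cW$ mod $p$ is simultaneously a generator and a parity-check matrix for $\cC$, and the MDS property becomes equivalent to the statement that every $\frac{p+1}{2}$ columns of $\cW$ are $\F_p$-linearly independent, i.e.\ that every $\frac{p+1}{2} \times \frac{p+1}{2}$ minor of $\cW$ reduced mod $p$ is nonzero.

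To verify this I would specialize $\cW$ to Paley form: after normalization, $\cW = \left[\begin{smallmatrix} 0 & \mathbf{1}^{T} \\ \mathbf{1} & Q \end{smallmatrix}\right]$, where $Q$ is the Jacobsthal matrix with $Q_{i,j} = \chi(i-j)$ for the Legendre character $\chi$ on $\F_p$. The required minors reduce by Laplace expansion along the first row and column to minors of $Q$, and these are classical quadratic-character determinants whose nonvanishing mod $p$ can be extracted from identities of Jacobsthal/Carlitz type. An alternative, and possibly cleaner, strategy is to exhibit an explicit equivalence between $\cC$ and a (generalized) Reed--Solomon code over $\F_p$ by interpreting the rows of $\cW$ as evaluations of polynomials of degree $< k$ at the points of $\F_p \cup \{\infty\}$; if such an interpretation exists the MDS property follows immediately.

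I anticipate two main obstacles. First, the conjecture is stated for \emph{any} conference matrix of order $p+1$, not only Paley ones; for many $p$ all known conference matrices of this order are Paley-equivalent, so specializing is reasonable, but a uniform argument that avoids the quadratic character would be needed to cover the full conjecture. Second, within the Paley case the bookkeeping for arbitrary $\frac{p+1}{2}$-subsets of columns is genuinely delicate: ruling out cancellation among the $\bigl(\tfrac{p+1}{2}\bigr)!$ signed products of Legendre symbols that appear in each determinant is the real character-sum step, and this is where I expect the proof to require the most work.
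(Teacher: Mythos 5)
The first thing to note is that the paper does not prove this statement: it is posed as a conjecture, reported as verified by computer only up to $n=23$ for conference matrices from Paley's quadratic-residue construction, and it reappears as an open problem in the concluding section. So there is no proof of record to compare yours against, and your text is---as you yourself acknowledge---a plan rather than a proof. The framing is sound: the dimension $\frac{p+1}{2}$ comes from \Tref{thm:cnf_p}, Singleton gives $d\leq\frac{p+3}{2}$, self-duality over $\F_p$ (\Tref{thm:selfdual}) lets $\cW$ serve simultaneously as generator and parity-check data, and MDS-ness reduces to the independence of every $\frac{p+1}{2}$-subset of columns of $\cW$ modulo $p$. But the step that would actually establish the conjecture---the nonvanishing of the relevant quadratic-character determinants, or alternatively the identification of $\C$ with a generalized Reed--Solomon code via polynomial evaluation on $\F_p\cup\{\infty\}$---is exactly the step you defer to unspecified ``identities of Jacobsthal/Carlitz type.'' That step is the entire content of the conjecture; what you have written reproduces the setup that makes it plausible (and that the authors implicitly gesture at when citing the self-dual MDS codes of length $q+1$ in \cite{GrGu08}), but does not close it.

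Two smaller points. First, your reformulation ``every $\frac{p+1}{2}\times\frac{p+1}{2}$ minor of $\cW$ reduced mod $p$ is nonzero'' is stronger than what you need and cannot be the right target: since $\cW$ has $p+1$ rows but rank only $\frac{p+1}{2}$ over $\F_p$, square minors of that size built from dependent row sets vanish. The correct condition is that for each choice of $\frac{p+1}{2}$ columns the resulting $(p+1)\times\frac{p+1}{2}$ submatrix has full column rank, i.e.\ \emph{some} minor supported on those columns is nonzero; pinning down which minor to evaluate is part of the work you have not yet done. Second, you are right to flag that the conjecture quantifies over \emph{all} conference matrices of order $p+1$, not only Paley ones, so even a complete character-sum argument would prove only a special case; a full proof would need an argument intrinsic to the relation $\cW\cW^T=pI_n$ rather than to the Legendre symbol.
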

Conjecture~\ref{con:conf_MDS} was verified to
be true up to $n=23$, where the conference matrices are based on the
Paley's construction from quadratic residues modulo $p$.
Codes with these parameters (self-dual MDS of length $q+1$, $q$ a prime power)
were constructed in~\cite{GrGu08}.

\section{Codes from the Doubling Construction}
\label{sec:double}

The most simple and celebrated method to construct Hadamard
matrices of large orders from Hadamard matrices of small orders
is the \emph{doubling construction}. Given a Hadamard matrix $\cH$ of order $n$,
the matrix
$$
\left[\begin{array}{cc}
\cH & \cH \\
\cH & -\cH
\end{array}\right]~,
$$
is a Hadamard matrix of order $2n$.

A Sylvester Hadamard matrix of order $m$,
$\cH_m$, is a $2^m \times 2^m$ Hadamard
matrix obtained by the doubling construction starting with
the Hadamard matrix $\cH_0 =\begin{scriptsize}
\left[ \begin{array}{c} 1 \end{array} \right] \end{scriptsize}$ of order one. This matrix
is also based on the first order Reed-Muller code~\cite{McSl77}.
Let $H_0 = \begin{scriptsize} \left[ \begin{array}{c} 1 \end{array} \right] \end{scriptsize}$
and $H_{m+1} =\left[\begin{array}{cc}
H_m & H_m \\
0 & H_m
\end{array}\right]$, $m \geq 0$.
Let $G(m,j)$, $0 \leq j \leq m$, be the $2^m \times 2^m$ matrix constructed from
$H_m$ as follows. Let $2^\ell$ be the Hamming weight of the $s$-th row
of $H_m$. If $\ell \geq j$ then the $s$-th row of $G(m,j)$ will be
the same as the $s$-th row of $H_m$. If $\ell < j$ then the $s$-th row of
$G(m,j)$ will be the $s$-th row of $H_m$ multiplied by $2^{j-\ell}$.

It is easy to verify that $G(m,j)$ can be defined recursively as follows.
For $1 \leq j < m$, $G(m,j)$ is given by
$$
G(m,j) =\left[\begin{array}{cc}
G(m-1,j-1) & G(m-1,j-1) \\
0 & G(m-1,j)
\end{array}\right]~,
$$
where $G(m,m)$ is given by
$$
G(m,m) =\left[\begin{array}{cc}
G(m-1,m-1) & G(m-1,m-1) \\
0 & 2G(m-1,m-1)
\end{array}\right]~,
$$
and $G(m,0) = H_m$.

The following lemma can be proved by applying a simple induction.

\begin{lemma}
$$\Lambda (G(m,m))= \Lambda (\cH_m)~,$$
for all $m \geq 0$.
\end{lemma}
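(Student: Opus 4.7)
The plan is to prove the lemma by induction on $m$. The base case $m=0$ is immediate since $G(0,0) = H_0 = \cH_0 = [1]$, so the two lattices coincide trivially.

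For the inductive step, assume $\Lambda(G(m-1,m-1)) = \Lambda(\cH_{m-1})$. I will exploit the two block decompositions
$$
\cH_m = \begin{bmatrix} \cH_{m-1} & \cH_{m-1} \\ \cH_{m-1} & -\cH_{m-1} \end{bmatrix},
\qquad
G(m,m) = \begin{bmatrix} G(m-1,m-1) & G(m-1,m-1) \\ 0 & 2\,G(m-1,m-1) \end{bmatrix}.
$$
For the containment $\Lambda(G(m,m)) \subseteq \Lambda(\cH_m)$, a top-block row of $G(m,m)$ has the form $(r,r)$ with $r$ a row of $G(m-1,m-1)$; by the inductive hypothesis, $r = \sum a_i h_i$ for rows $h_i$ of $\cH_{m-1}$ and integers $a_i$, so $(r,r) = \sum a_i (h_i, h_i)$ is an integer combination of top rows of $\cH_m$. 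A bottom-block row of $G(m,m)$ has the form $(0, 2r)$; writing $2r = \sum 2 a_i h_i$, one obtains $(0, 2r) = \sum a_i\bigl[(h_i, h_i) - (h_i, -h_i)\bigr]$, again an integer combination of rows of $\cH_m$.

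The reverse containment $\Lambda(\cH_m) \subseteq \Lambda(G(m,m))$ is symmetric. A top row $(h,h)$ of $\cH_m$ is handled by applying the inductive hypothesis in the other direction: write $h = \sum a_i g_i$ where the $g_i$ are rows of $G(m-1,m-1)$, and then $(h,h) = \sum a_i(g_i, g_i)$ comes from the top block of $G(m,m)$. For a bottom row $(h,-h)$, I use the identity $(h,-h) = (h,h) - (0,2h)$ together with $(0,2h) = \sum a_i(0, 2 g_i)$, which is an integer combination of the bottom rows of $G(m,m)$.

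I do not anticipate a real obstacle; the block structure forces everything into place. As a sanity check (and an alternative closing argument), both lattices have the same volume: $V(\Lambda(\cH_m)) = (2^m)^{2^{m-1}}$ by Theorem~\ref{thm:LeeHad}, while the triangular block form of $G(m,m)$ gives the same determinant via the recursion $d_m = d_{m-1}^2 \cdot 2^{2^{m-1}}$. Consequently, verifying a single containment would suffice, but the direct two-sided argument above is short and transparent enough that splitting it into cases is the cleanest route.
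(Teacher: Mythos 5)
Your proof is correct and follows exactly the route the paper intends: the paper simply remarks that the lemma ``can be proved by applying a simple induction'' and leaves the details to the reader, and your inductive argument via the two block decompositions is precisely that induction, carried out in full. Both containments check out (the identities $(0,2h_i)=(h_i,h_i)-(h_i,-h_i)$ and $(h,-h)=(h,h)-(0,2h)$ are the right moves), and your determinant sanity check is also consistent.
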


\begin{example}
The Sylvester Hadamard matrix of order 2, is a Hadamard
matrix of order 4, given by

$$
\cH_2 = \left[\begin{array}{cccc}
+1 & +1 & +1 & +1 \\
+1 & -1 & +1 & -1 \\
+1 & +1 & -1 & -1 \\
+1 & -1 & -1 & +1
\end{array}\right]~.
$$\\
In $\Z^4$ it generates the same lattice as the generator matrix
$$
G(2,2)= \left[\begin{array}{cccc}
1 & 1 & 1 & 1 \\
0 & 2 & 0 & 2 \\
0 & 0 & 2 & 2 \\
0 & 0 & 0 & 4
\end{array}\right]~.
$$\\
Reducing the entries of $G(2,2)$ into {\it zeroes} and {\it ones}
yields
$$
H_2 = \left[\begin{array}{cccc}
1 & 1 & 1 & 1 \\
0 & 1 & 0 & 1 \\
0 & 0 & 1 & 1 \\
0 & 0 & 0 & 1
\end{array}\right]~.
$$
\end{example}

\vspace{0.10cm}

Clearly, the rows of $G(m,j)$ are linearly independent.
Let $\Lambda(m,j)$ be the lattice whose generator matrix is $G(m,j)$,
and $\C(m,j)$ the code reduced from $\Lambda (m,j)$, over $\Z_{2^j}$,
whose generator matrix is $G(m,j)$.
$\C(m,j)$ was constructed by a completely
different approach for the control of the peak-to-mean envelope power ratio
in orthogonal frequency-division multiplexing in~\cite{Sch07},
where its size and minimum distance were discussed.
How this sequence of codes can be generalized
for length which is not a power of two and to Hadamard matrices which
are not based on Sylvester matrices?
A possible answer to this question and
our different approach for these codes will be demonstrated
in Section~\ref{sec:transform}.

The following lemma is an immediate result from the recursive
construction of $H_m$.
\begin{lemma}
\label{lem:NrowW}
The number of rows with weight $2^i$, $0 \leq i \leq m$, in $H_m$
is $\binom{m}{i}$.
\end{lemma}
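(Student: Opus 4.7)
The plan is to prove the lemma by induction on $m$, exploiting the recursive block structure of $H_m$. First I would verify the base case $m=0$: here $H_0 = [1]$ has a single row of weight $2^0 = 1$, and indeed $\binom{0}{0} = 1$ while $\binom{0}{i} = 0$ for $i > 0$, so the count is correct.

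For the inductive step, I would unpack the recursion $H_{m+1} = \bigl[\begin{smallmatrix} H_m & H_m \\ 0 & H_m \end{smallmatrix}\bigr]$ and observe that the rows of $H_{m+1}$ split into two groups of size $2^m$ each. A row in the top block has the form $(r,r)$ for some row $r$ of $H_m$, and therefore has Hamming weight $2\cdot\mathrm{wt}(r)$; a row in the bottom block has the form $(0,r)$ and therefore has weight $\mathrm{wt}(r)$. Consequently, a row of $H_{m+1}$ has weight $2^i$ if and only if it is either a top-block row coming from a row of $H_m$ of weight $2^{i-1}$, or a bottom-block row coming from a row of $H_m$ of weight $2^i$.

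Writing $a(m,i)$ for the number of rows of $H_m$ with weight $2^i$, this yields the recurrence
\[
a(m+1,i) \;=\; a(m,i-1) + a(m,i),
\]
with the convention $a(m,-1)=0$. By the inductive hypothesis $a(m,j) = \binom{m}{j}$ for all $j$, and Pascal's identity $\binom{m}{i-1} + \binom{m}{i} = \binom{m+1}{i}$ gives $a(m+1,i) = \binom{m+1}{i}$, which closes the induction.

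There is no real obstacle here: the argument is a straightforward induction once one notes that the doubling recursion for $H_m$ either preserves or doubles the weight of each existing row, which is precisely the combinatorial content encoded by Pascal's triangle. The only point requiring a moment of care is the boundary case $i=0$ in the recurrence, where only the bottom block contributes; this is handled automatically by the convention $a(m,-1)=0$, consistent with $\binom{m}{-1}=0$.
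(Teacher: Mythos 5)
Your proof is correct and follows exactly the route the paper intends: the paper gives no written proof, stating only that the lemma is ``an immediate result from the recursive construction of $H_m$,'' and your induction via the block decomposition of $H_{m+1}$ together with Pascal's identity is precisely that argument, spelled out in full.
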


By Lemma~\ref{lem:NrowW} and by the definition of $G(m,j)$,
for each~$\ell$, $j \leq \ell \leq m$,
there exist rows in $G(m,j)$ with Manhattan weight $2^\ell$.
These are the only weights of rows in $G(m,j)$.

\begin{theorem}
\label{thm:Gij}
\begin{itemize}
\item The minimum Manhattan distance of $\Lambda(m,j)$ is $2^j$.

\item The volume of the lattice $\Lambda (m,j)$ is
$\Pi_{i=0}^j 2^{(j-i) \binom{m}{i}}$.

\item $\Lambda(m,j)$ is reduced to the code $\C (m,j)$.
$\C(m,j)$ has minimum Lee distance $2^j$.
\end{itemize}
\end{theorem}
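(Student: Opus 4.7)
The plan is to prove the three bullets in the order volume, minimum Manhattan distance, and then reducibility together with the Lee distance; the second bullet feeds into the third.

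For the volume, the cleanest route is to compute $\det G(m,j)$ directly. By its recursive definition $H_m$ is upper triangular with ones on the diagonal, so $\det H_m = 1$. The matrix $G(m,j)$ is obtained from $H_m$ by leaving untouched each row of Hamming weight $2^\ell$ with $\ell\geq j$ and scaling each row of Hamming weight $2^\ell$ with $\ell<j$ by $2^{j-\ell}$. By Lemma~\ref{lem:NrowW}, the number of rows of $H_m$ of weight $2^\ell$ is $\binom{m}{\ell}$, so
$$
V(\Lambda(m,j)) \;=\; |\det G(m,j)| \;=\; \prod_{\ell=0}^{j-1}\bigl(2^{j-\ell}\bigr)^{\binom{m}{\ell}} \;=\; \prod_{i=0}^{j} 2^{(j-i)\binom{m}{i}},
$$
the $i=j$ factor being trivially $1$.

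For the minimum Manhattan distance of $\Lambda(m,j)$, I would induct on $m$. The base $m=0$ is immediate, and $j=0$ is immediate since $G(m,0)=H_m$ generates all of $\Z^{2^m}$. For $1\leq j<m$, the block recursion writes every nonzero lattice point as
$$
\mathbf{x} \;=\; \bigl(\alpha G(m-1,j-1),\; \alpha G(m-1,j-1)+\beta G(m-1,j)\bigr)
$$
for integer row vectors $\alpha,\beta$. The crucial observation is that every row of $G(m-1,j)$ is a row of $G(m-1,j-1)$ scaled by a power of $2$, so $\Lambda(m-1,j)\subseteq\Lambda(m-1,j-1)$; hence the bottom half lies in $\Lambda(m-1,j-1)$. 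A three-way case split finishes the induction: if $\alpha=0$ the bottom half has Manhattan weight $\geq 2^j$ by induction; if $\alpha\neq 0$ and the bottom half is nonzero, top and bottom each contribute $\geq 2^{j-1}$, totalling $\geq 2^j$; and if $\alpha\neq 0$ while the bottom half vanishes then $\beta G(m-1,j)=-\alpha G(m-1,j-1)\neq 0$, forcing $\beta\neq 0$ and so the top half, which equals $-\beta G(m-1,j)$, already has weight $\geq 2^j$. The case $j=m$ goes through identically with the alternate recursion and the extra factor of $2$. Lemma~\ref{lem:NrowW} together with the scaling rule supplies a row of $G(m,j)$ of Manhattan weight exactly $2^j$, so the bound is tight.

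For reducibility and the Lee distance, I would first prove $2^j\mathbf{e}_i\in\Lambda(m,j)$ for every coordinate $i$, once more by induction on $m$ with the same block picture. Given that $\Lambda(m-1,j-1)\ni 2^{j-1}\mathbf{e}_k$ (hence $2^j\mathbf{e}_k$) and $\Lambda(m-1,j)\ni 2^j\mathbf{e}_k$, realizing $2^j\mathbf{e}_i$ on the bottom half with zero top half is direct, while realizing it on the top half requires cancelling the corresponding vector in the bottom half, which uses $\Lambda(m-1,j)\subseteq\Lambda(m-1,j-1)$. Thus $2^j\Z^{2^m}\subseteq\Lambda(m,j)$, so $\Lambda(m,j)$ reduces to a code $\cC(m,j)$ over $\Z_{2^j}$. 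The Lee weight of any nonzero codeword equals the minimum Manhattan weight of its $2^j\Z^{2^m}$-coset, which lies in $\Lambda(m,j)\setminus\{0\}$ and thus has weight $\geq 2^j$ by the first bullet; a row of $G(m,j)$ of weight $2^j$ provides a codeword meeting this bound, giving minimum Lee distance exactly $2^j$.

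The main obstacle is the case analysis for the minimum Manhattan distance, in particular the $\alpha\neq 0$ with vanishing bottom half sub-case; its resolution hinges on the non-obvious containment $\Lambda(m-1,j)\subseteq\Lambda(m-1,j-1)$ together with the linear independence of the rows of $G(m,j)$.
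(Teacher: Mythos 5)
Your proposal is correct and follows the same route the paper indicates: the volume by direct computation from Lemma~\ref{lem:NrowW} and the unit determinant of the triangular matrix $H_m$, and the distance, the containment $2^j\Z^{2^m}\subseteq\Lambda(m,j)$, and the Lee distance by induction on the block recursion for $G(m,j)$. The paper merely asserts these as "simple inductions," so your write-up is essentially a fleshed-out version of its proof, with the key containment $\Lambda(m-1,j)\subseteq\Lambda(m-1,j-1)$ correctly identified as the ingredient that makes the case analysis close.
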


\begin{proof}
\begin{itemize}
\item The minimum distance of $\Lambda(m,j)$ can be derived by a simple induction
from the recursive definition of $G(m,j)$.

\item The volume of $\Lambda (m,j)$ can be derived easily by induction
from the recursive definition of $G(m,j)$ or by a very simple direct computation
from Lemma~\ref{lem:NrowW}.

\item It is easily verified by using induction that
for each $i$, $1 \leq i \leq n$, the point $2^j \cdot \mathbf{e}_i$
is contained in $\Lambda (m,j)$. Thus,
$\Lambda (m,j)$ can be reduced to a code $\C$ of length $2^m$, in the
Lee metric, over the alphabet $\Z_{2^j}$. The minimum Lee distance
can be derived also by a simple induction.
\end{itemize}
\end{proof}

In Section~\ref{sec:transform} we will consider codes
related to the lattice $\Lambda (m,j)$. The covering radius of these codes
will be an important factor in our construction
for a space transformation. Therefore, we will devote the rest of
this section to find bounds on the covering radius
of the lattice $\Lambda (m,j)$, which is equal to the covering radius of
the code $\C(m,j)$.

$\Lambda (m,0)$ is equal to $\Z^{2^m}$ and hence its covering
radius is~0. $\Lambda (m,1)$ consists of all the points in $\Z^{2^m}$
which have an even sum of elements. The covering radius
of this code is clearly~1.
$\C (m,2)$ is a diameter perfect code
with minimum Lee distance 4 and covering
radius 2~\cite{AAK01,Etz11}. In general
we don't know the exact covering radius of $\Lambda (m,j)$ except for
two lattices (codes) for which the covering radius was found with
a computer aid. The covering radius of $\Lambda (3,3)$
equals 6 and the covering radius of $\Lambda (4,3)$
equals 8. We also found that the covering radius of
$\Lambda (4,4)$ is at most 20. However, two bounds
can be derived from the structure of $G(m,j)$.
Let $r(m,j)$ be the covering radius of the lattice
$\Lambda (m,j)$ (and also the code $\C(m,j)$).

\begin{theorem}
\label{thm:rmm}
$r(m,m) \leq 3 r(m-1,m-1) + 2^{m-1}$, $m \geq 5$, where
$r(2,2)=2$, $r(3,3)=6$ and $r(4,4) \leq 20$.
\end{theorem}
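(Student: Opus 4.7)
The plan is to exploit the block recursive structure
$$G(m,m) = \begin{pmatrix} G(m-1,m-1) & G(m-1,m-1) \\ \mathbf{0} & 2G(m-1,m-1) \end{pmatrix}$$
which, when the rows are combined with integer coefficients, shows that every lattice point of $\Lambda(m,m) \subseteq \Z^{2^m}$ has the form
$$(\mathbf{u},\, \mathbf{u} + 2\mathbf{v}), \qquad \mathbf{u}, \mathbf{v} \in \Lambda(m-1,m-1).$$
I would first verify this concrete description (the forward inclusion is immediate from the block form; the reverse inclusion follows because the top rows $(\mathbf{g},\mathbf{g})$ produce any $\mathbf{u} \in \Lambda(m-1,m-1)$ duplicated, and the bottom rows $(\mathbf{0}, 2\mathbf{g})$ then add any element of $2\Lambda(m-1,m-1)$ to the second half only).

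Next, given an arbitrary point $(\mathbf{x}, \mathbf{y}) \in \Z^{2^m}$ that we wish to cover, I would construct a nearby lattice point in three stages. First, pick $\mathbf{u} \in \Lambda(m-1,m-1)$ minimizing $d_M(\mathbf{x}, \mathbf{u})$, so that $d_M(\mathbf{x}, \mathbf{u}) \le r(m-1,m-1)$. Second, use integer division by $2$ to write
$$\mathbf{y} - \mathbf{u} = 2\mathbf{q} + \mathbf{r}, \qquad \mathbf{q} \in \Z^{2^{m-1}},\ \mathbf{r} \in \{0,1\}^{2^{m-1}},$$
whence $\|\mathbf{r}\|_1 \le 2^{m-1}$. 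Third, pick $\mathbf{v} \in \Lambda(m-1,m-1)$ minimizing $d_M(\mathbf{q}, \mathbf{v})$, so that $d_M(\mathbf{q}, \mathbf{v}) \le r(m-1,m-1)$.

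The lattice point $(\mathbf{u}, \mathbf{u} + 2\mathbf{v})$ then lies in $\Lambda(m,m)$ by the first step, and the triangle inequality in the Manhattan metric gives
\begin{align*}
d_M\bigl((\mathbf{x},\mathbf{y}),\,(\mathbf{u},\mathbf{u}+2\mathbf{v})\bigr)
&= d_M(\mathbf{x},\mathbf{u}) + d_M(\mathbf{y}-\mathbf{u},\,2\mathbf{v}) \\
&\le r(m-1,m-1) + d_M(2\mathbf{q}+\mathbf{r},\,2\mathbf{v}) \\
&\le r(m-1,m-1) + 2\,d_M(\mathbf{q},\mathbf{v}) + \|\mathbf{r}\|_1 \\
&\le 3\,r(m-1,m-1) + 2^{m-1}.
\end{align*}
Since $(\mathbf{x},\mathbf{y})$ was arbitrary, the covering radius satisfies $r(m,m) \le 3\,r(m-1,m-1) + 2^{m-1}$, which is the desired recursion. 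The base values $r(2,2)=2$, $r(3,3)=6$, $r(4,4)\le 20$ are taken from the stated computer verification and form the seeds of the recurrence (the hypothesis $m \ge 5$ simply reflects that for these small orders the direct computations beat what the recursion would yield).

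The only delicate point, and therefore the main thing I would double-check, is the manipulation $d_M(2\mathbf{q}+\mathbf{r},\,2\mathbf{v}) \le 2\,d_M(\mathbf{q},\mathbf{v}) + \|\mathbf{r}\|_1$, which is an easy coordinatewise application of the triangle inequality on $\Z$ (in each coordinate, $|2q_i + r_i - 2v_i| \le 2|q_i - v_i| + r_i$). Everything else is bookkeeping on the block recursion for $G(m,m)$.
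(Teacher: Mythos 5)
Your proof is correct and follows essentially the same route as the paper's: both exploit the block decomposition of $G(m,m)$ to write the covering lattice point as $(\bfu,\bfu+2\bfv)$ with $\bfu,\bfv\in\Lambda(m-1,m-1)$, cover the first half at cost $r(m-1,m-1)$, pay at most $2^{m-1}$ to make the second-half residual even, and cover the halved residual at cost $2\,r(m-1,m-1)$. Your lattice-level formulation in $\Z^{2^m}$ is in fact a cleaner execution than the paper's, which works in the code over $\Z_{2^m}$ and therefore needs the reduction modulo $2^{m-1}$, the lifting of $\bfz'$ to $\bfz$, and the correction words $\bfp^{(i)}$ that your argument avoids.
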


\begin{proof}
Let $(\bfx,\bfy) \in \Z_{2^m}^{2^m}$, where $\bfx,\bfy \in \Z_{2^m}^{2^{m-1}}$.
We have to show that there exists a codeword $\bfc \in \C(m,m)$
such that $d_L (\bfc,(\bfx,\bfy)) \leq 3 r(m-1,m-1) + 2^{m-1}$.

Let $\bfx' \in \Z_{2^{m-1}}^{2^{m-1}}$ be the word obtained from $\bfx$
by reducing each entry of $\bfx$ modulo $2^{m-1}$. Let $\bfz'$ be a codeword
in $\C(m-1,m-1)$ such that $d_L (\bfx',\bfz') \leq r(m-1,m-1)$.
By using the same linear combination of rows from
$G(m-1,m-1)$, which was used to obtain $\bfz'$
in $\Z_{2^{m-1}}^{2^{m-1}}$, with computation
modulo $2^m$ instead of modulo $2^{m-1}$ which
was used for $\bfz'$, we obtain a word
$\bfz \in \Z_{2^m}^{2^{m-1}}$ not necessarily equals to $\bfz'$
(for each coordinate, the values of $\bfz$ and $\bfz'$ are equal
modulo~$2^{m-1}$).

For each $i$,
$1 \leq i \leq 2^{m-1}$ we form a word $\bfp^{(i)} \in \Z_{2^m}^{2^m}$ as follows.
Let ${\ell_i = \min \{ z_i - x_i ~(\bmod~2^m),x_i - z_i ~(\bmod~2^m) \}}$, such that
$0 \leq \ell_i \leq 2^{m-1}$. If $\ell_i < 2^{m-2}$ then
$\bfp^{(i)} = (\zero , \zero )$. If $\ell_i \geq 2^{m-2}$ then
$\bfp^{(i)} = 2^{m-1} (\bfe_i ,\bfe_i )$.
It is easy to verify by the definition
of $G(m,m)$ that $(\bfz,\bfz) \in \C(m,m)$
and $\bfp^{(i)} \in \C(m,m)$ for each $1 \leq i \leq 2^{m-1}$.
Therefore, $(\bfv,\bfv)=(\bfz,\bfz) + \sum_{i=1}^{2^{m-1}} \bfp^{(i)}$
is a codeword in $\C(m,m)$ and $d_L (\bfx,\bfv)
= d_L (\bfx',\bfz') \leq r(m-1,m-1)$.

Let $\bfy' \in \Z_{2^m}^{2^{m-1}}$ be the
word defined as follows.
If $\bfv_i - \bfy_i$ is an even integer then
$\bfy'_i = \bfy_i$ and if $\bfv_i - \bfy_i$ is an odd
integer then $\bfy'_i = \bfy_i +1 (\bmod~2^m)$.
Since the covering radius of $\C(m-1,m-1)$ is $r(m-1,m-1)$,
it follows that there exists a codeword $\bfz'' \in \C(m-1,m-1)$ such
that $d_L (\bfy'-\bfv,2\bfz'') \leq 2r(m-1,m-1)$.
Clearly, $(\bfv,\bfv+2\bfz'') \in \C(m,m)$ and
$d_L((\bfx,\bfy'),(\bfv,\bfv+2\bfz'')) \leq 3r(m-1,m-1)$.
It implies that $d_L((\bfx,\bfy),(\bfv,\bfv+2\bfz'')) \leq 3 r(m-1,m-1)+2^{m-1}$.
Thus, $r(m,m) \leq 3 r(m-1,m-1) + 2^{m-1}$.
\end{proof}

One can analyze the bound of Theorem~\ref{thm:rmm} and obtain that
when $m$ is large $r(m,m)$ is less than
approximately $4 \cdot 3^{m-2}$, or $n^{1.585}$. But, we
believe that the covering radius of $\C(m,m)$ is considerably smaller.

\begin{theorem}
\label{thm:rmj}
$r(m,j) \leq r(m-1,j-1) + r(m-1,j)$, $2 < j < m$,
where $r(m,2)=2$ for $m \geq 2$ and
upper bound on $r(m,m)$ is given in Theorem~\ref{thm:rmm}.
\end{theorem}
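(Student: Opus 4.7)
The plan is to exploit the recursive block structure
$$
G(m,j) = \left[\begin{array}{cc} G(m-1,j-1) & G(m-1,j-1) \\ 0 & G(m-1,j) \end{array}\right],
$$
which shows that every codeword of $\C(m,j)$ can be written as $(\bfv, \bfv + \bfb)$, where $\bfv$ is the reduction modulo $2^j$ of an integer combination of rows of $G(m-1,j-1)$ (so that $(\bfv,\bfv) \in \C(m,j)$) and $\bfb \in \C(m-1,j)$ (so that $(\zero,\bfb) \in \C(m,j)$). Given a target $(\bfx,\bfy) \in \Z_{2^j}^{2^m}$, I will first choose $\bfv$ so that $d_L(\bfx,\bfv) \leq r(m-1,j-1)$, and then choose $\bfb$ so that $d_L(\bfy-\bfv,\bfb) \leq r(m-1,j)$; adding the two contributions will give the desired bound.

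For the first step I would replay, almost verbatim, the construction used in the first half of the proof of Theorem~\ref{thm:rmm}, but at modulus $2^j$ rather than $2^m$. Reduce $\bfx$ modulo $2^{j-1}$ to obtain $\bfx' \in \Z_{2^{j-1}}^{2^{m-1}}$, find $\bfz' \in \C(m-1,j-1)$ with $d_L(\bfx',\bfz') \leq r(m-1,j-1)$, and lift $\bfz'$ to $\bfz \in \Z_{2^j}^{2^{m-1}}$ via the same integer linear combination of rows of $G(m-1,j-1)$. For every coordinate $i$ at which the Lee distance $\ell_i$ from $x_i$ to $z_i$ in $\Z_{2^j}$ satisfies $\ell_i \geq 2^{j-2}$, correct by adding $2^{j-1}\bfe_i$. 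By the last bullet of Theorem~\ref{thm:Gij} the vector $2^{j-1}\bfe_i$ lies in $\Lambda(m-1,j-1)$, so each correction $(2^{j-1}\bfe_i, 2^{j-1}\bfe_i)$ is a codeword of $\C(m,j)$, and the resulting $\bfv$ still satisfies $(\bfv,\bfv) \in \C(m,j)$. The four-case analysis from the proof of Theorem~\ref{thm:rmm} then shows that $d_L(\bfx,\bfv)$ in $\Z_{2^j}$ equals $d_L(\bfx',\bfz')$ in $\Z_{2^{j-1}}$ coordinate by coordinate, hence at most $r(m-1,j-1)$.

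The second step is cleaner than in Theorem~\ref{thm:rmm} precisely because $j < m$: the lower-right block of $G(m,j)$ is $G(m-1,j)$, not $2\,G(m-1,m-1)$, so no parity adjustment is required. Having fixed $\bfv$, the word $\bfy - \bfv \in \Z_{2^j}^{2^{m-1}}$ lies in the ambient space of $\C(m-1,j)$, so by definition of covering radius there exists $\bfb \in \C(m-1,j)$ with $d_L(\bfy-\bfv,\bfb) \leq r(m-1,j)$. The codeword $(\bfv,\bfv+\bfb) \in \C(m,j)$ then satisfies
$$
d_L\bigl((\bfx,\bfy),(\bfv,\bfv+\bfb)\bigr) = d_L(\bfx,\bfv) + d_L(\bfy,\bfv+\bfb) \leq r(m-1,j-1) + r(m-1,j).
$$
The only step requiring genuine verification is the correction argument in Step~1: I need to confirm that the case analysis of $\alpha = z_i - x_i \bmod 2^j$ used in Theorem~\ref{thm:rmm} transfers unchanged to the modulus $2^j$, which it does because the argument uses only that the underlying Lee distance in $\Z_{2^{j-1}}$ is at most $2^{j-2}$, a property that holds for any $j \geq 3$. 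Apart from this, the proof is a direct two-block decomposition and I expect no further obstacle.
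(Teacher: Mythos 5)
Your proposal is correct and follows essentially the same route as the paper's proof: reduce $\bfx$ modulo $2^{j-1}$, cover it within $\C(m-1,j-1)$, lift modulo $2^j$ and repair each coordinate whose Lee discrepancy is at least $2^{j-2}$ by adding $2^{j-1}(\bfe_i,\bfe_i)$, then cover $\bfy-\bfv$ by a codeword of $\C(m-1,j)$ in the second block. Your observation that the second step needs no parity adjustment (unlike Theorem~\ref{thm:rmm}, where the lower-right block $2G(m-1,m-1)$ forces the extra $2^{m-1}$ term) is exactly the point the paper's proof exploits as well.
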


\begin{proof}
Let $(\bfx,\bfy) \in \Z_{2^j}^{2^m}$, where $\bfx,\bfy \in \Z_{2^j}^{2^{m-1}}$.
We have to show that there exists a codeword $\bfc \in \C(m,j)$
such that $d_L (\bfc,(\bfx,\bfy)) \leq r(m-1,j-1) + r(m-1,j)$.

Let $\bfx' \in \Z_{2^{j-1}}^{2^{m-1}}$ be the word obtained from $\bfx$
by reducing each entry of $\bfx$ modulo~$2^{j-1}$.
Let $\bfz' \in \Z_{2^{j-1}}^{2^{m-1}}$ be a codeword
in $\C(m-1,j-1)$ such that $d_L (\bfx',\bfz') \leq r(m-1,j-1)$.
By using the same linear combination of rows from
$G(m-1,j-1)$, which was used to obtain $\bfz'$
in $\Z_{2^{j-1}}^{2^{m-1}}$, but with computation
modulo~$2^j$ instead of modulo~$2^{j-1}$ which
was used for $\bfz'$, we obtain a word
$\bfz \in \Z_{2^j}^{2^{m-1}}$ not necessarily equals to $\bfz'$
(for each coordinate, the values of $\bfz$ and $\bfz'$ are equal
modulo~$2^{j-1}$).

For each $i$,
$1 \leq i \leq 2^{m-1}$ we form a word $\bfp^{(i)} \in \Z_{2^j}^{2^m}$ as follows.
Let ${\ell_i = \min \{ z_i - x_i ~(\bmod~2^j),x_i - z_i ~(\bmod~2^j) \}}$, such that
$0 \leq \ell_i \leq 2^{j-1}$. If $\ell_i < 2^{j-2}$ then
$\bfp^{(i)} = (\zero , \zero )$. If $\ell_i \geq 2^{j-2}$ then
$\bfp^{(i)} = 2^{j-1} (\bfe_i ,\bfe_i )$.
It is easy to verify by the definition
of $G(m,j)$ that $(\bfz,\bfz) \in \C(m,j)$
and $\bfp^{(i)} \in \C(m,j)$ for each $1 \leq i \leq 2^{m-1}$.
Therefore, $(\bfv,\bfv)=(\bfz,\bfz) + \sum_{i=1}^{2^{m-1}} \bfp^{(i)}$
is a codeword in $\C(m,j)$ and $d_L (\bfx,\bfv)
= d_L (\bfx',\bfz') \leq r(m-1,j-1)$.

Since the covering radius of $\C(m-1,j)$ is $r(m-1,j)$,
it follows that there exists a codeword $\bfz'' \in \C(m-1,j)$ such that
$d_L (\bfy-\bfv,\bfz'') \leq r(m-1,j)$.
Clearly, $(\bfv,\bfv+\bfz'') \in \C(m,j)$ and
$d_L((\bfx,\bfy),(\bfv,\bfv+\bfz'')) \leq r(m-1,j-1)+r(m-1,j)$.
Thus, $r(m,j) \leq r(m-1,j-1) + r(m-1,j)$.
\end{proof}


The covering radius of a code can be computed also from
the parity-check matrix of the code. Hence,
it would be interesting
to examine the parity-check matrix of the code $\C(m,j)$.
We construct the parity check matrix $F(m,j)$ of the code
$\C(m,j)$ from the matrix $H_m$ as follows.
Let $2^\ell$ be the Hamming weight in the $s$-th row
of $H_m$. If $m-\ell < j$ then
$F(m,j)$ will contain the $s$-th row of $H_m$ multiplied by $2^{m-\ell}$.
There are no other rows in $F(m,j)$.

One can verify that $F(m,j)$, $0 \leq j \leq m$,
is defined recursively as follows (we leave the formal
proof to the reader). For $1 \leq j < m$, $F'(m,j)$ is
given by
$$
F'(m,j) =\left[\begin{array}{cc}
F'(m-1,j) & F'(m-1,j) \\
0 & 2F'(m-1,j-1)
\end{array}\right]~,
$$
where $F'(m,m)$ is defined by
$$
F'(m,m) =\left[\begin{array}{cc}
F'(m-1,m-1) & F'(m-1,m-1) \\
0 & 2F'(m-1,m-1)
\end{array}\right]~,
$$
and $F'(m,0) = [1~1~\cdots~1]$, for $m \geq 0$.

$F(m,j)$ is constructed from $F'(m,j)$ by omitting the last row.

\section{Lee Sphere Transformations}
\label{sec:transform}

In multidimensional coding, many techniques are applied on
multidimensional cubes of $\Z^n$ and cannot be
applied on other shapes in $\Z^n$,
e.g.~\cite{EtYa09,Abd86,BBZS}. Assume we want
to apply a technique which is applied
on any $n$-dimensional cube of $\Z^n$ to a different
$n$-dimensional shape $\cS$ of $\Z^n$. This
problem can be solved by a transformation from
$\Z^n$ to $\Z^n$, which preserves volumes, in which each
$n$-dimensional shape~$\cS$ of~$\Z^n$ is
transformed into a shape $\cS'$
which can be inscribed in a relatively
small $n$-dimensional cube of $\Z^n$. The
technique is now applied on the image of the transformation
and then transformed back into the
domain. One of the most important shapes in
this context is the $n$-dimensional Lee sphere with radius $R$,
$S_{n,R}$. Clearly, an $n$-dimensional Lee sphere with radius $R$ can be
inscribed in an $\underbrace{(2R+1)\times \cdots \times
(2R+1)}_{n\textrm{ times}}$ $n$-dimensional cube. In~\cite{EtYa09}
a transformation of $\Z^n$ is given for which $S_{n,R}$ is
transformed into a shape inscribed in a cube of size
$\underbrace{(R+1)\times (R+1)\times \cdots \times
(R+1)}_{n-1\textrm{ times}} \times(2R+1)$. The gap from the
theoretical size of the cube is still large since the size of the
$n$-dimensional Lee sphere with radius~$R$ is $\frac{(2R)^n}{n!} +
O(R^{n-1})$, when $n$ is fixed and $R \longrightarrow \infty$. The
goal of this section is to close on this gap. In the
process, some interesting codes and coding problems
will arise. The transformation
we have to define is clearly a discrete
transformation, but for completeness, and since it has
an interest of its own, we will consider also the
more simple case of a continuous transformation
$T:\mathbb{R}^n\rightarrow \mathbb{R}^n$. This can be viewed also
as a transformation on conscribed cross-polytopes, which
were defined in~\cite{GoWe}, rather than on Lee spheres.
For every Lee sphere,~$S_{n,R}$, the \emph{conscribed cross-polytope}, $CP_{n,R}$,
is defined~\cite{GoWe} to be the convex hull of the $2n$ centers points of
the $(n-1)$-dimensional extremal hyperfaces of $S_{n,R}$.
What makes this figure more attractive to us than similar figures is
that the volume of $CP_{n,R}$ is exactly $\frac{(2R+1)^n}{n!}$.

The continuous transformation
will also be interesting from error-correcting
codes point of view as it will be understood in the
sequel. The transformation which
will be described will make use of weighing matrices
with some symmetry (symmetric or skew-symmetric).
In Section~\ref{sec:exist} we will discuss the existence
of such matrices and their relevance in our construction.

\subsection{The Continuous Transformation}

In this subsection we are going to define a sequence
of transformations based on symmetric or skew symmetric weighing
matrices, some of which will transform Lee
spheres (or conscribed cross-polytopes) in the space,
into shapes inscribed in a relatively small cubes.
These transformations also
form some interesting codes in the Lee and Manhattan metrics
which are related to the codes defined in Section~\ref{sec:double}.
Of these transformations there is one which will preserve volumes
and will serve as our main transformation.

Let $\cW$ be a symmetric or skew symmetric weighing
matrix of order $n$ and weight $w$. Given a real number $s > 0$, we define a
transformation $T^\cW_s:\mathbb{R}^n\rightarrow \mathbb{R}^n$,
as follows. For each $\bfx = (x_1,\ldots,x_n)^t \in \mathbb{R}^n$,

\begin{equation}
\label{eq:TWs} T^\cW_s (\bfx) \deff \frac{\cW \bfx}{s}.
\end{equation}

\begin{lemma}
\label{lem:s_involutionW}
Let $\cW$ be a weighing matrix of order $n$ and weight $w$ and let $s >0$
be a positive real number.
\begin{itemize}
\item
If $\cW$ is symmetric then for all
${\bfx = (x_1,\ldots,x_n)^t \in \mathbb{R}^n}$,
$${T^\cW_{\frac{w}{s}} (T^\cW_s (\bfx)) =\bfx}.$$
\item
If $\cW$ is skew symmetric then for all
${\bfx = (x_1,\ldots,x_n)^t \in \mathbb{R}^n}$,
$${T^\cW_{\frac{w}{s}} (T^\cW_s (\bfx)) =-\bfx}.$$
\end{itemize}
\end{lemma}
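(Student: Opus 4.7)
The plan is to just unwind the definition of $T^\cW_s$ and use the defining identity $\cW \cW^T = w I_n$ of a weighing matrix together with the (skew) symmetry hypothesis on $\cW$. There is no real obstacle here; the lemma is essentially an algebraic identity about $\cW^2$.

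First I would compute the composition directly from the definition \eqref{eq:TWs}:
\[
T^\cW_{w/s}\bigl(T^\cW_s(\bfx)\bigr) \;=\; \frac{\cW\bigl(T^\cW_s(\bfx)\bigr)}{w/s} \;=\; \frac{s}{w}\cdot\cW\!\left(\frac{\cW\bfx}{s}\right) \;=\; \frac{\cW^2\,\bfx}{w}.
\]
Observe that the scaling factor $s$ drops out entirely, so the statement is really about the matrix $\cW^2$ and does not depend on $s$.

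Next I would evaluate $\cW^2$ using the defining identity $\cW\cW^T = wI_n$. In the symmetric case $\cW^T=\cW$, which immediately gives $\cW^2 = \cW\cW^T = wI_n$, so $\cW^2\bfx/w = \bfx$. In the skew symmetric case $\cW^T=-\cW$, hence $\cW\cW^T = -\cW^2 = wI_n$, giving $\cW^2 = -wI_n$ and $\cW^2\bfx/w = -\bfx$. Substituting back into the expression for the composition yields the two stated identities. This handles both bullets and completes the proof; I would present it as one short calculation split into the two cases.
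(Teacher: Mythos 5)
Your computation is correct and matches the paper's own proof essentially verbatim: both reduce the composition to $\cW^2\bfx/w$ and then invoke $\cW\cW^T = wI_n$ together with $\cW^T = \pm\cW$ to identify $\cW^2$ as $\pm wI_n$. Nothing further is needed.
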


\begin{proof}
We will prove the case where $\cW$ is a symmetric matrix.
For each $\bfx = (x_1,\ldots,x_n)^t \in \mathbb{R}^n$, $$T^\cW_{\frac{w}{s}}(T^\cW_s(\bfx)) =
\frac{\cW  \cdot \cW  \bfx}{\frac{w}{s} \cdot s} = \frac{\cW \cdot \cW^T \bfx}{w} = \frac{w I_n
\cdot \bfx}{w} = \bfx~.$$ The case in which $\cW$
is a skew symmetric matrix has an identical proof.
\end{proof}

\vspace{0.1cm}

Let $\cW$ be a symmetric or a skew symmetric
weighing matrix of order $n$
and weight $w$ and let $s$ be a positive integer which
divides $w$. Let $\Lambda^\cW_s$ be the set of points in $\Z^n$ which are
mapped to points of $\Z^n$ by the transformation~$T^\cW_s$
given by~(\ref{eq:TWs}), i.e.
$$
\Lambda^\cW_s \deff \{ \bfx \in \Z^n ~:~ T^\cW_s ( \bfx ) \in \Z^n \}~.
$$

The proof of the next theorem can be deduced from the theory of
dual lattices~\cite{CoSl88}. But, to avoid a new sequence of definitions
and known results we will provide another direct proof.
\begin{theorem}
\label{thm:kernel_transW} Let $\cW$ be a symmetric or a skew symmetric
weighing matrix
of order $n$ and weight $w$, and let $s$
be a positive integer which divides $w$. Then $\Lambda^\cW_s$ is a
lattice with minimum Manhattan
distance $s$; moreover $\Lambda^\cW_s
= T^\cW_{\frac{w}{s}}(\Lambda^\cW_{\frac{w}{s}})$.
Finally, $\Lambda^\cW_s$ can be reduced
to a code $\C^\cW_s$ of length~$n$, in the Lee
metric, over the alphabet $\Z_s$.
\end{theorem}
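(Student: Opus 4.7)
\medskip

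The plan is to unpack the definition and observe that $\Lambda^\cW_s = \{\bfx \in \Z^n : \cW\bfx \in s\Z^n\}$, then verify each of the four assertions (lattice, minimum distance $s$, reduction to $\Z_s$-code, and the image identity) using only the defining relation $\cW^T\cW = wI_n$ together with the symmetry (or skew-symmetry) of $\cW$.

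First I would show additive closure and integrality of the coordinate lattice $s\Z^n$ inside $\Lambda^\cW_s$. Additive closure is immediate from the linearity of $T^\cW_s$. For the inclusion $s\Z^n \subseteq \Lambda^\cW_s$, note that for each standard basis vector $\bfe_i$, $\cW(s\bfe_i) = s \cdot(\text{column }i\text{ of }\cW)$ has entries in $s\{-1,0,1\} \subseteq s\Z$, hence $T^\cW_s(s\bfe_i) \in \Z^n$. This simultaneously proves that $\Lambda^\cW_s$ is a full-rank sublattice of $\Z^n$ and that it admits period $(s,s,\ldots,s)$, so that it reduces to a code $\C^\cW_s$ of length $n$ over $\Z_s$ in the Lee metric.

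The decisive step is to identify $\Lambda^\cW_s$ with a rescaling of the original weighing-matrix lattice $\Lambda(\cW)$. If $\bfx \in \Lambda^\cW_s$ then $\cW\bfx = s\bfy$ for some $\bfy \in \Z^n$; left-multiplying by $\cW^T$ and using $\cW^T\cW = wI_n$ gives $w\bfx = s\,\cW^T\bfy$, so
\[
\bfx \;=\; \tfrac{s}{w}\cW^T\bfy \;=\; \pm\tfrac{s}{w}\cW\bfy,
\]
the sign depending on whether $\cW$ is symmetric or skew-symmetric. Conversely, every point $\tfrac{s}{w}\cW^T\bfy$ that happens to lie in $\Z^n$ satisfies $\cW\bigl(\tfrac{s}{w}\cW^T\bfy\bigr) = s\bfy \in s\Z^n$. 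Thus $\Lambda^\cW_s = \tfrac{s}{w}\Lambda(\cW)$ (as sets in $\R^n$, restricted to $\Z^n$). Since \Tref{thm:LeeHad} tells us $\Lambda(\cW)$ has minimum Manhattan distance $w$, scaling by $s/w$ yields minimum Manhattan distance exactly $s$ for $\Lambda^\cW_s$; the lower bound $s$ follows from scaling, and the upper bound $s$ is witnessed by $s\bfe_i$.

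Finally, the identity $\Lambda^\cW_s = T^\cW_{w/s}(\Lambda^\cW_{w/s})$ is a direct consequence of \Lref{lem:s_involutionW}. For $\bfy \in \Lambda^\cW_{w/s}$, by definition $T^\cW_{w/s}(\bfy) \in \Z^n$, and applying $T^\cW_s$ to it yields $\pm\bfy \in \Z^n$, which exhibits $T^\cW_{w/s}(\bfy) \in \Lambda^\cW_s$; conversely, for $\bfx \in \Lambda^\cW_s$ set $\bfy := \pm T^\cW_s(\bfx) \in \Z^n$, and the same involution shows $\bfy \in \Lambda^\cW_{w/s}$ and $T^\cW_{w/s}(\bfy) = \bfx$. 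The only delicate point, which I expect to be the main bookkeeping obstacle, is handling the sign in the skew-symmetric case uniformly: the map $T^\cW_{w/s}$ is no longer literally an involution, but because both $\bfy$ and $-\bfy$ lie in $\Lambda^\cW_{w/s}$ whenever one of them does, the sign is harmless and the set equality still holds.
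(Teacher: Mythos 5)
Your proof is correct, and for two of the three claims (the lattice property and the identity $\Lambda^\cW_s = T^\cW_{w/s}(\Lambda^\cW_{w/s})$) it follows the paper's argument essentially verbatim, including the observation that the $\pm$ sign in the skew-symmetric case is absorbed by closure of the lattices under negation. Where you genuinely diverge is the minimum-distance claim. The paper proves it directly: writing $\cW\bfx = s\bfy$ with some $y_i \neq 0$, it notes $s \le s|y_i| = |\sum_{j} \omega^{(i)}_j x_j| \le \sum_j |x_j|$, a one-line triangle-inequality bound that uses nothing but $|\omega^{(i)}_j|\le 1$. You instead left-multiply by $\cW^T$ to get $w\bfx = s\,\cW^T\bfy$, conclude $\Lambda^\cW_s \subseteq \frac{s}{w}\Lambda(\cW)$, and import the minimum distance $w$ of $\Lambda(\cW)$ from \Tref{thm:LeeHad}. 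That is valid (note that $\cW^T\bfy$ lies in the row lattice $\Lambda(\cW)$ regardless of symmetry, so the $\pm$ aside is not even needed there), and it buys a structural dividend the paper's local argument does not: it exhibits $\Lambda^\cW_s$ explicitly as the integer points of a rescaled copy of $\Lambda(\cW)$, which is exactly the dual-lattice picture the authors say they are deliberately sidestepping, and which makes \Tref{thm:CW} and \Lref{lem:TinC} transparent. The cost is that you route a one-line inequality through the column-negation argument of \Tref{thm:LeeHad}, which is the hardest proof in Section II; the paper's direct bound is more economical and self-contained. Both the lower bound and the witness $s\bfe_i$ for tightness are handled correctly in your version.
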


\begin{proof}
We break the proof into three parts. First, we will prove
that $\Lambda^\cW_s$ is a lattice. We will proceed to prove that
the minimum Manhattan distance of $\Lambda^\cW_s$ is $s$;
and that $\Lambda^\cW_s$ can be reduced to a code
$\C^\cW_s$ of length $n$, in the Lee
metric, over the alphabet $\Z_s$.
Finally, we will prove that
$\Lambda^\cW_s = T^\cW_{\frac{w}{s}}(\Lambda^\cW_{\frac{w}{s}})$.
\begin{enumerate}
\item
If $\bfx_1 , \bfx_2 \in \Lambda^\cW_s$ then $\bfx_1 , \bfx_2 \in \Z^n$ and
${T^\cW_s(\bfx_1), T^\cW_s(\bfx_2) \in \Z^n}$. Hence,
$$
T^\cW_s(\bfx_1 + \bfx_2) = \frac{\cW (\bfx_1 + \bfx_2 )}{s} =
\frac{\cW \bfx_1}{s} + \frac{\cW \bfx_2}{s}$$
$$ = T^\cW_s(\bfx_1)+T^\cW_s(\bfx_2) \in \Z^n ~,
$$
and therefore $\bfx_1 + \bfx_2 \in \Lambda^\cW_s$, i.e. $\Lambda^\cW_s$ is a lattice.

\item
Since $\Lambda^\cW_s$ is a lattice it follows that its minimum Manhattan
distance is the Manhattan weight of a nonzero lattice point
with minimum Manhattan weight.
Let $\bfx = (x_1,x_2,\ldots ,x_n)^t \in \Lambda^\cW_s$ be a nonzero
lattice point, i.e. ${(y_1,y_2,\ldots,y_n)^t = \frac{\cW \bfx}{s} =
( \frac{s \cdot y_1}{s}, \frac{s \cdot y_2}{s},
\ldots , \frac{s \cdot y_n}{s}) \in \Z^n}$.
There exists at least one $i$ for which $y_i \neq 0$. For this~$i$, we have
$s \cdot y_i = \sum_{j=1}^n \omega^{(i)}_j x_j$. Since $|w^{(i)}_j| \leq 1$
for every $j$, $1 \leq j \leq n$,
it follows that $\sum_{j=1}^n | x_j | \geq s$. Thus, the minimum Manhattan
weight of $\bfx$ is at least $s$ and the same it true
for the minimum Manhattan distance of~$\Lambda^\cW_s$.
It is easy to verify that $(0,\ldots,0,s,0,\ldots,0)^t$ is a point in $\Lambda^\cW_s$
and hence the minimum Manhattan distance of $\Lambda^\cW_s$ is exactly $s$; and
$\Lambda^\cW_s$ can be reduced to a code $\C^\cW_s$ of length $n$, in the Lee
metric, over the alphabet $\Z_s$.

\item
Let $\bfx \in \Lambda^\cW_s$, i.e. $\bfx \in \Z^n$, $\bfy = T_s(\bfx) \in Z^n$.
By Lemma~\ref{lem:s_involutionW} we have $T^\cW_{\frac{w}{s}}(\bfy) =
T^\cW_{\frac{w}{s}}(T^\cW_s(\bfx))$ equals either $\bfx$
or $-\bfx$, i.e. $\bfy \in \Lambda^\cW_{\frac{w}{s}}$,
which implies that $\bfx \in T^\cW_{\frac{w}{s}}(\Lambda^\cW_{\frac{w}{s}})$.
Therefore, $\Lambda^\cW_s \subseteq T^\cW_{\frac{w}{s}}(\Lambda^\cW_{\frac{w}{s}})$.

Let $\bfx \in T^\cW_{\frac{w}{s}}(\Lambda^\cW_{\frac{w}{s}})$, i.e.
$\bfx \in \Z^n$, $\bfx = T^\cW_{\frac{w}{s}} (\bfy)$,
where $\bfy \in \Lambda^\cW_{\frac{w}{s}} \subset \Z^n$.
By Lemma~\ref{lem:s_involutionW} we have that
$T^\cW_s (\bfx) = T^\cW_s(T^\cW_{\frac{w}{s}}(\bfy))$ equals either $\bfy$
or $-\bfy$, and hence $\bfx \in \Lambda^\cW_s$.
Therefore, $T^\cW_{\frac{w}{s}}(\Lambda^\cW_{\frac{w}{s}}) \subseteq \Lambda^\cW_s$.

Thus, $\Lambda^\cW_s=T^\cW_{\frac{w}{s}}(\Lambda^\cW_{\frac{w}{s}})$.
\end{enumerate}
\end{proof}

\begin{theorem}
\label{thm:dist_HADC}
Let $\cH$ be a Hadamard matrix of order $n > 4$ and let $s$
be a positive integer which divides $n$.
If $s$ is even then the minimum Lee distance of $\C^\cH_s$
is $s$.
If $s > 1$ is odd then the minimum Lee distance of $\C^\cH_s$
is greater than $s$ and is at most~$\frac{n}{2}$.
\end{theorem}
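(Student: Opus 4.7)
The plan is to translate everything into lattice language. Since $s\bfe_i\in\Lambda^\cH_s$ (because $\cH(s\bfe_i)/s$ is just the $i$-th column of $\cH$), we have $s\Z^n\subseteq\Lambda^\cH_s$, so the codewords of $\C^\cH_s$ are in bijection with the cosets $\bfy+s\Z^n$ in $\Lambda^\cH_s$. The Lee weight of $\bfy$ equals the minimum Manhattan weight over its lifts $\bfx\in\bfy+s\Z^n$; a minimum-weight lift satisfies $|x_i|\le s/2$ for every $i$. I would then split into the even and odd cases.

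For $s$ even the upper bound is realized by $\bfx=\tfrac{s}{2}(\bfe_i+\bfe_j)$ with $i\neq j$: every entry of $\cH\bfx=\tfrac{s}{2}(\mathrm{col}_i+\mathrm{col}_j)$ lies in $\{-s,0,+s\}$, so $\bfx\in\Lambda^\cH_s$; its reduction modulo $s$ is nonzero and has Lee weight $s/2+s/2=s$. The lower bound is immediate from \Tref{thm:LeeHad}: a minimum-Manhattan lift of a nonzero codeword is a nonzero element of $\Lambda^\cH_s$, hence has Manhattan weight at least $s$, and the Lee weight of the codeword equals this Manhattan weight.

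The odd case requires real work for the lower bound. Suppose for contradiction some nonzero codeword has Lee weight at most $s$, and let $\bfx\in\Lambda^\cH_s\setminus\{\mathbf{0}\}$ be a minimum-Manhattan lift, so $|x_i|\le(s-1)/2$ and, by \Tref{thm:LeeHad}, $\sum_i|x_i|=s$. From $\cH^T\cH=nI_n$ I get the identity $\|\cH\bfx\|_2^2=n\|\bfx\|_2^2$. On the other hand, each entry of $\cH\bfx$ is a multiple of $s$ of absolute value at most $\sum_i|x_i|=s$, so it lies in $\{-s,0,+s\}$. A zero entry would split the nonzero terms $H_{ij}x_j$ into positive and negative groups whose absolute sums are equal and together equal $s$, forcing $s$ even, a contradiction. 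Hence every coordinate of $\cH\bfx$ is $\pm s$, giving $\|\bfx\|_2^2=s^2$; combined with $\sum_i|x_i|=s$, Cauchy--Schwarz collapses the support of $\bfx$ to a single coordinate, so $\bfx=\pm s\bfe_i$, which violates $|x_i|\le(s-1)/2$.

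For the odd-case upper bound I construct $\bfy=\tfrac{s-1}{2}(\bfr_i-\bfr_j)$, where $\bfr_i,\bfr_j$ are two distinct rows of $\cH$ regarded as column vectors. Using $\cH\cH^T=nI_n$ gives $\cH\bfy=\tfrac{n(s-1)}{2}(\bfe_i-\bfe_j)$; since $4\mid n$ and $s\mid n$ with $s$ odd, $n/s$ is even, hence $s\mid n/2$ and $\cH\bfy/s\in\Z^n$, so $\bfy\in\Lambda^\cH_s$. Entries of $\bfr_i-\bfr_j$ lie in $\{-2,0,2\}$, and by orthogonality of rows $i$ and $j$ exactly $n/2$ of them are nonzero; therefore $\bfy\bmod s$ is nonzero with entries in $\{0,1,s-1\}$ and exactly $n/2$ nonzero coordinates, each of Lee weight $1$, producing a codeword of Lee weight $n/2$. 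I expect the main obstacle to be the strict inequality in the odd case: \Tref{thm:LeeHad} alone is not enough because Lee weights can be strictly smaller than Manhattan weights, and what pushes the argument through is combining the $\ell_2$-identity $\|\cH\bfx\|_2^2=n\|\bfx\|_2^2$ with the parity obstruction that kills a vanishing row of $\cH\bfx$.
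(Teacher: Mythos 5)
Your proof is correct, and the interesting part of it --- the lower bound in the odd case --- goes by a genuinely different route than the paper's. The paper works in normal form: it uses the all-ones row to force $\sum_i x_i \in \{-s,0,s\}$, rules out $0$ by parity (so $\bfx$ has constant sign), and then finds a row of $\cH$ with mixed signs on the support of $\bfx$ whose inner product with $\bfx$ has absolute value strictly between $0$ and $s$ and again cannot vanish by parity. You instead avoid any normalization and use the quadratic identity $\|\cH\bfx\|_2^2 = n\|\bfx\|_2^2$: the parity obstruction kills the possibility of a zero coordinate in $\cH\bfx$, forcing all coordinates to be $\pm s$, and then $\|\bfx\|_2^2=s^2$ together with $\|\bfx\|_1=s$ collapses the support via Cauchy--Schwarz. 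Both arguments ultimately hinge on the same fact --- an odd $s$ cannot split into two equal integer halves --- but yours packages the row-by-row analysis into a single $\ell_2$ computation and needs no ``without loss of generality'' reduction, which is arguably cleaner. Your upper-bound witnesses also differ cosmetically: the paper uses the agreement pattern of the first two rows (the word $1^{n/2}0^{n/2}$ after normalization), while you use $\frac{s-1}{2}(\bfr_i-\bfr_j)$ and the identity $\cH\bfr_i=n\bfe_i$; both yield a Lee-weight-$n/2$ codeword, and the divisibility check $s\mid \frac{n}{2}$ (from $s$ odd, $4\mid n$) is the same. One small correction: the fact that every nonzero point of $\Lambda^\cH_s$ has Manhattan weight at least $s$ is \Tref{thm:kernel_transW}, not \Tref{thm:LeeHad} (the latter concerns the row-span lattice $\Lambda(\cW)=\Lambda^\cW_w$, not $\Lambda^\cW_s$ for general $s$); with that citation fixed, everything goes through.
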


\begin{proof}
If $s$ is even and divides $n$ then $(0,\ldots,0,\frac{s}{2},\frac{s}{2})$
is a codeword in $\C^\cH_s$ and the minimum Lee distance $s$ follows from
Theorem~\ref{thm:kernel_transW}.

If $s$ is odd and divides $n$ then $s$ also
divides~$\frac{n}{2}$.
W.l.o.g. we assume that the first row of $\cH$ consists
of $n$ $+1$'s and the second row consists of
$\frac{n}{2}$ $+1$'s followed by $\frac{n}{2}$ $-1$'s.
Therefore, in all the other rows of $\cH$ we have exactly~$\frac{n}{4}$
$+1$'s and~$\frac{n}{4}$ $-1$'s
in the first $\frac{n}{2}$ entries. It implies
that the word of length~$n$ with~$\frac{n}{2}$ \emph{ones}
followed by~$\frac{n}{2}$ \emph{zeroes} is a codeword
in~$\C^\cH_s$. Hence, the minimum
Lee distance of~$\C^\cH_s$ is at most $\frac{n}{2}$.

Assume that the minimum Lee distance of $\C^\cH_s$ is $s$. Hence,
there is a lattice point $\bfx$ in $\Lambda^\cH_s$
with Manhattan weight $s$ and at least two
nonzero entries. Since $\cH$ contains a row with all entries +1's
it follows that the sum of elements in $\bfx$ is $-s$, 0, or $s$.
Since~$s$ is odd, it follows that this sum cannot be 0. Hence, $\bfx$ cannot
contain both positive and negative entries.
Since the projection, of the entries with nonzero
elements in $\bfx$, on $\cH$, has rows with both +1 and -1,
it follows that the inner product
of these rows with~$\bfx$ is not divisible by~$s$. Hence $\bfx$ is not a lattice point
in $\Lambda^\cH_s$, a contradiction. By Theorem~\ref{thm:kernel_transW},
the minimum Lee distance of~$\C^\cH_s$ is at least~$s$ and thus
the minimum Lee distance of~$\C^\cH_s$
is greater than~$s$ and is at most~$\frac{n}{2}$.
\end{proof}

\vspace{0.2cm}

Theorem~\ref{thm:dist_HADC} provides some information
on the minimum Lee distance of the code $\C^\cH_s$,
where $\cH$ is a Hadamard matrix.
In general, for a weighing matrix $\cW$,
what is the Lee distance of the code $\C^\cW_s$? It appears
that it is not always reduced to~$s$ as the Manhattan distance
of $\Lambda^\cW_s$.
In fact, if $\frac{n}{2} < w <n$ we conjecture that
it is always $w$, in contrast to the result in
Theorem~\ref{thm:dist_HADC} for $w=n$.

\begin{theorem}
\label{thm:CW}
If $\cW$ is a weighing matrix of order $n$ and weight $w$ then
$\Lambda^\cW_w = \Lambda (\cW)$.
\end{theorem}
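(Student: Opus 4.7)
The plan is to use the identity $\cW \cW^T = w I_n$, which makes $\cW$ invertible over $\R$ with $\cW^{-1} = \cW^T/w$. This immediately ties the two lattices together: $\Lambda(\cW)$ is the $\Z$-span of the rows of $\cW$, i.e.\ the set of vectors $\bfx = \cW^T \bfu$ for $\bfu \in \Z^n$ (viewing vectors as columns), while $\Lambda^\cW_w$ is the preimage under $T^\cW_w$ of $\Z^n$ restricted to $\Z^n$.

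First I would prove $\Lambda(\cW) \subseteq \Lambda^\cW_w$. If $\bfx = \cW^T \bfu$ with $\bfu \in \Z^n$, then $\bfx \in \Z^n$ (since $\cW$ is integer-valued), and
\[
T^\cW_w(\bfx) \;=\; \frac{\cW\bfx}{w} \;=\; \frac{\cW\cW^T\bfu}{w} \;=\; \frac{wI_n\,\bfu}{w} \;=\; \bfu \;\in\; \Z^n,
\]
so $\bfx \in \Lambda^\cW_w$.

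Next I would prove the reverse inclusion $\Lambda^\cW_w \subseteq \Lambda(\cW)$. Let $\bfx \in \Lambda^\cW_w$, so $\bfx \in \Z^n$ and $\bfy \deff \cW\bfx/w \in \Z^n$. Multiplying the relation $\cW\bfx = w\bfy$ on the left by $\cW^T/w$ and again using $\cW^T\cW = wI_n$ (which follows from $\cW\cW^T = wI_n$ since a one-sided inverse of a square matrix is two-sided) gives $\bfx = \cW^T \bfy$. Since $\bfy \in \Z^n$, this exhibits $\bfx$ as an integer combination of the rows of $\cW$, so $\bfx \in \Lambda(\cW)$.

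There is no real obstacle: the only subtle point is matching the two conventions (rows of $\cW$ as generators for $\Lambda(\cW)$ versus $\cW$ acting on columns in $T^\cW_w$), which is reconciled by the relation $\cW^{-1} = \cW^T/w$. Note that symmetry of $\cW$ is not needed here, and the statement holds for any weighing matrix.
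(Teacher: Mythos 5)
Your proof is correct, and it takes a genuinely more direct route than the paper's. The paper obtains the result as a corollary of Theorem~\ref{thm:kernel_transW}: it notes that $\Lambda^\cW_1 = \Z^n$ and applies the identity $\Lambda^\cW_s = T^\cW_{\frac{w}{s}}(\Lambda^\cW_{\frac{w}{s}})$ with $s=w$ to conclude $\Lambda^\cW_w = T^\cW_1(\Z^n) = \cW\Z^n$, which it then identifies with $\Lambda(\cW)$. You instead verify the two inclusions by hand from $\cW\cW^T = wI_n$. This buys you two things. First, your argument does not lean on Theorem~\ref{thm:kernel_transW}, whose statement (and the surrounding definition of $\Lambda^\cW_s$) is given under a symmetry or skew-symmetry hypothesis on $\cW$; your proof therefore covers the full generality in which the present theorem is stated. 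Second, you are careful about the distinction between the row lattice $\cW^T\Z^n = \Lambda(\cW)$ and the column lattice $\cW\Z^n = T^\cW_1(\Z^n)$: the paper's final step silently identifies the two, which is immediate when $\cW^T = \pm\cW$ (since $-\Z^n = \Z^n$) but in general requires exactly the computation $\bfx = \cW^T\bfy$ that you supply in the reverse inclusion. The paper's version is shorter and fits its narrative of treating the maps $T^\cW_s$ as a family of near-involutions; your self-contained version is the more robust of the two.
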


\begin{proof}
Clearly, $\C^\cW_1$ is equal $\Z^n$.
By Theorem~\ref{thm:kernel_transW} we have that
$\Lambda^\cW_w = T^\cW_1 (\Lambda^\cW_1)$.  Since $\Lambda^\cW_1 =\Z^n$ it follows
that~$\Lambda^\cW_w$ contains exactly all
the linear combinations of the rows from~$\cW$.
Thus, $\Lambda^\cW_w =\Lambda (\cW)$.
\end{proof}

\begin{lemma}
\label{lem:TinC}
If $s_1$ divides $s_2$ and $s_1 < s_2$, then
$\Lambda^{\cW}_{s_2} \subset \Lambda^{\cW}_{s_1}$.
\end{lemma}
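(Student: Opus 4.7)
The plan is to prove both halves of the proper inclusion directly from the definition of $\Lambda^{\cW}_s$ together with \Tref{thm:kernel_transW}; neither step requires any real work.

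For the containment $\Lambda^{\cW}_{s_2} \subseteq \Lambda^{\cW}_{s_1}$, I would take an arbitrary $\bfx \in \Lambda^{\cW}_{s_2}$ and unwind the definition: $\bfx \in \Z^n$ and $T^{\cW}_{s_2}(\bfx) = \cW\bfx/s_2 \in \Z^n$, which is the statement that every coordinate of the integer vector $\cW\bfx$ is divisible by $s_2$. Because $s_1$ divides $s_2$, each such coordinate is \emph{a fortiori} divisible by $s_1$, so $T^{\cW}_{s_1}(\bfx) = \cW\bfx/s_1 \in \Z^n$, hence $\bfx \in \Lambda^{\cW}_{s_1}$.

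For the strictness of the inclusion, I would exhibit a concrete point of $\Lambda^{\cW}_{s_1}$ that fails to lie in $\Lambda^{\cW}_{s_2}$. By \Tref{thm:kernel_transW}, the minimum Manhattan distance of $\Lambda^{\cW}_{s_1}$ equals $s_1$, and the proof of that theorem shows $s_1 \bfe_i \in \Lambda^{\cW}_{s_1}$ for every $i$. On the other hand, the same theorem says $\Lambda^{\cW}_{s_2}$ has minimum Manhattan distance $s_2 > s_1$, so no nonzero vector of Manhattan weight $s_1$ can belong to $\Lambda^{\cW}_{s_2}$. In particular $s_1 \bfe_i \in \Lambda^{\cW}_{s_1} \setminus \Lambda^{\cW}_{s_2}$, establishing that the inclusion is proper.

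Since the argument reduces to an elementary divisibility observation combined with the minimum-distance information we already have, I do not anticipate any real obstacle; the only thing to be careful about is to cite \Tref{thm:kernel_transW} for both the fact that $s_1 \bfe_i$ lies in $\Lambda^{\cW}_{s_1}$ and the fact that every nonzero element of $\Lambda^{\cW}_{s_2}$ has Manhattan weight at least $s_2$.
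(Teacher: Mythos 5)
Your proof is correct and follows essentially the same route as the paper: the containment is the same divisibility observation, and the strictness is deduced from the minimum Manhattan distances given in \Tref{thm:kernel_transW} (the paper leaves the witness implicit, whereas you name $s_1\bfe_i$ explicitly, but the idea is identical).
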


\begin{proof}
If $\bfx \in \Z^n$ and the entries of $\cW \bfx$ are divisible by $s_2$
then by definition we have $\bfx \in \Lambda^{\cW}_{s_2}$.
Since~$s_1$ divides~$s_2$, it follows that
the entries of $\cW \bfx$ are divisible also by $s_1$. Hence,
$\bfx \in \Lambda^{\cW}_{s_1}$ and
$\Lambda^{\cW}_{s_2} \subseteq \Lambda^{\cW}_{s_1}$.
By Theorem~\ref{thm:kernel_transW},
the minimum distance of $\Lambda^{\cW}_{s_1}$
is $s_1$ and the minimum distance
of $\Lambda^{\cW}_{s_2}$ is $s_2$.
Thus, $\Lambda^{\cW}_{s_2} \subset \Lambda^{\cW}_{s_1}$.
\end{proof}

\begin{cor}
If $s$ divides $w$ then $\Lambda^\cW_s$ contains $\Lambda (\cW)$.
\end{cor}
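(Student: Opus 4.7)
The plan is to obtain the corollary as a direct consequence of \Tref{thm:CW} together with \Lref{lem:TinC}, both of which appear immediately above. The identity $\Lambda^\cW_w = \Lambda(\cW)$ from \Tref{thm:CW} is the key bridge: it lets us replace the lattice $\Lambda(\cW)$ (defined via the rows of $\cW$ as a basis) with the lattice $\Lambda^\cW_w$ (defined via divisibility of $\cW\bfx$ by $w$), after which the containment becomes a statement purely about the family $\{\Lambda^\cW_s\}_s$.

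First, I would reduce to the case $s < w$, since when $s = w$ the containment $\Lambda(\cW) \subseteq \Lambda^\cW_w$ follows at once from \Tref{thm:CW} with equality. Then, assuming $s \mid w$ and $s < w$, I would apply \Lref{lem:TinC} with $s_1 = s$ and $s_2 = w$ to conclude $\Lambda^\cW_w \subseteq \Lambda^\cW_s$. Finally, substituting $\Lambda(\cW)$ for $\Lambda^\cW_w$ via \Tref{thm:CW} gives $\Lambda(\cW) \subseteq \Lambda^\cW_s$, which is exactly the claim.

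There is really no obstacle here: once the two preceding results are in hand, the argument is a single line of composition. The only thing worth being careful about is that \Lref{lem:TinC} is stated with the strict inequality $s_1 < s_2$ (and yields a strict containment $\subset$), so the case $s = w$ must be handled separately, but only to note that equality holds there, which is consistent with the weaker ``contains'' asserted in the corollary. I would therefore present the proof in two sentences: one invoking \Tref{thm:CW} to identify $\Lambda(\cW)$ with $\Lambda^\cW_w$, and one invoking \Lref{lem:TinC} to place $\Lambda^\cW_w$ inside $\Lambda^\cW_s$.
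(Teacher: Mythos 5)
Your proof is correct and is exactly the route the paper intends: the corollary is stated without proof precisely because it follows by combining \Tref{thm:CW} ($\Lambda^\cW_w = \Lambda(\cW)$) with \Lref{lem:TinC} ($\Lambda^\cW_w \subseteq \Lambda^\cW_s$ for $s \mid w$). Your careful separation of the $s=w$ case, where the strict inequality hypothesis of \Lref{lem:TinC} fails but equality from \Tref{thm:CW} suffices, is a correct and worthwhile refinement.
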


We now turn to a volume preserving transformation from the
set of all transformations which were defined.
This transformation is $T^{\cW}_{\sqrt{w}}$ and redefined
as ${T^\cW:\mathbb{R}^n\rightarrow \mathbb{R}^n}$ to be
\begin{equation}
\label{eq:TW} T^\cW(\bfx) \deff \frac{\cW \bfx}{\sqrt{w}}.
\end{equation}



Theorem~\ref{thm:kernel_transW} is applied also with the
transformation $T^{\cW}$. In this case
$w=D^2$, where $D$ is a positive integer,
$\Lambda^\cW \deff \Lambda^\cW_D$ is a lattice
with minimum Manhattan distance~$D$, and ${\Lambda^\cW
= T^\cW(\Lambda^\cW)}$. Finally, $\Lambda^\cW$
can be reduced to a code~$\C^\cW$ of length~$n$,
in the Lee metric, over the alphabet~$\Z_D$.

\begin{lemma}
\label{lem:Hinscribed}
A conscribed cross-polytope, centered at
$\bfc = (c_1,\ldots,c_n)^t \in \R^n$, $CP_{n,R} ( \bfc)$,
is inscribed after the transformation~$T^\cW$
inside an $n$-dimensional cube of size
$$\left( \frac{2R+1}{\sqrt{w}} \right) \times \cdots \times \left( \frac{2R+1}{\sqrt{w}} \right).$$
\end{lemma}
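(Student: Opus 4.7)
The plan is a direct coordinate-wise bound on the image of $CP_{n,R}(\bfc)$ under the linear map $T^\cW$. First I would recall (from the definition of conscribed cross-polytope given just before Lemma~\ref{lem:Hinscribed}) that $CP_{n,R}(\bfc)$ is precisely the convex hull of the $2n$ points $\bfc \pm (R+\tfrac12)\mathbf{e}_i$, which is the $\ell^1$-ball
$$CP_{n,R}(\bfc) \;=\; \Bigl\{\bfx\in\R^n \;:\; \sum_{i=1}^n |x_i - c_i| \le R+\tfrac12\Bigr\}.$$
(A quick sanity check in dimension~$2$ reproduces the volume $(2R+1)^n/n!$ stated by the authors.)

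Next I would use the linearity of $T^\cW$: for any $\bfx\in CP_{n,R}(\bfc)$, writing $\bfd=\bfx-\bfc$ and $\bfc' = T^\cW(\bfc)$, we get $T^\cW(\bfx) - \bfc' = \cW\bfd/\sqrt{w}$. So it suffices to show that each coordinate of $\cW\bfd/\sqrt{w}$ has absolute value at most $(R+\tfrac12)/\sqrt{w} = (2R+1)/(2\sqrt{w})$; doubling this gives a cube of side $(2R+1)/\sqrt{w}$ centered at $\bfc'$, which is exactly the conclusion.

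The key estimate is then a one-line application of the triangle inequality together with the defining property $|\omega^{(i)}_j|\le 1$ of a weighing matrix. Concretely, for each row index $i$,
$$\Bigl|(\cW\bfd)_i\Bigr| \;=\; \Bigl|\sum_{j=1}^n \omega^{(i)}_j d_j\Bigr| \;\le\; \sum_{j=1}^n |\omega^{(i)}_j|\,|d_j| \;\le\; \sum_{j=1}^n |d_j| \;\le\; R+\tfrac12,$$
so $|(T^\cW(\bfx) - \bfc')_i| \le (R+\tfrac12)/\sqrt{w}$, as required.

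There is no real obstacle here: once the $\ell^1$ description of the cross-polytope is in place, everything follows from the bound $|\omega^{(i)}_j|\le 1$ and the triangle inequality, without using orthogonality of $\cW$, symmetry, or the normalization $\cW\cW^T=wI_n$. In particular the lemma would hold for \emph{any} $\{0,\pm 1\}$-matrix, but of course it is only useful together with Theorem~\ref{thm:kernel_transW} (which gives the lattice/volume-preserving structure of $T^\cW$) when $\cW$ is a weighing matrix.
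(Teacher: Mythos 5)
Your proposal is correct and follows essentially the same route as the paper's own proof: describe $CP_{n,R}(\bfc)$ as the $\ell^1$-ball of radius $R+\tfrac{1}{2}$ about $\bfc$, use linearity of $T^\cW$, and bound each coordinate of $\cW(\bfx-\bfc)$ by the triangle inequality with $|\omega^{(i)}_j|\le 1$. Your closing observation that only the $\{0,\pm1\}$-entry bound (and not orthogonality) is used is also consistent with the paper's argument.
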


\begin{proof}
$CP_{n,R} ( \bfc)$ is contained in the following set of points
$$
CP_{n,R} ( \bfc) =\left\{\bfc+ (x_1,\ldots,x_n)^t \ \Big| \ \sum_{i=1}^n|x_i|\leq R+\frac{1}{2} \right\}~,
$$
where $(x_1,\ldots,x_n)^t \in \R^n$. This set of points
is transformed by the transformation $T^\cW$ into the following set
of points,
\begin{align*}
& T^\cW(CP_{n,R} ( \bfc)) & \\
& = \left\{T^\cW(\bfc+(x_1,\ldots,x_n)^t) \ \Big| \ \sum_{i=1}^n|x_i|\leq R+\frac{1}{2} \right\} & \\
& = \left\{\frac{\cW \bfc}{\sqrt{w}} + \frac{\cW (x_1,\ldots,x_n)^t}{\sqrt{w}}\ \Big| \ \sum_{i=1}^n|x_i|\leq R+\frac{1}{2} \right\}. &
\end{align*}
If $\cW (x_1,\ldots,x_n)^t = (y_1,\ldots,y_n)^t$ then, for $1\leq i\leq n$,
$$
|y_i| =\Big| \sum_{j=1}^n \omega^{(i)}_j x_j \Big| \leq  \sum_{j=1}^n|x_j| \leq R+\frac{1}{2}~.
$$
Therefore, the set $T^\cW (CP_{n,R} ( \bfc))$ is located inside the following
$\left( \frac{2R+1}{\sqrt{w}} \right) \times \cdots \times
\left( \frac{2R+1}{\sqrt{w}} \right)$ $n$-dimensional cube
$$
\left\{\frac{\cW \bfc}{\sqrt{w}} + (\ell_1,\ldots ,\ell_n)^t \ \Big| \ |\ell_i| \leq \frac{R+\frac{1}{2}}{\sqrt{w}}, ~1 \leq i \leq n ~ \right\}~.
$$
\end{proof}

Note that since $\det(\cW/\sqrt{w}) = 1$,
the transformation $T^\cW$ also preserves volumes.
The volume of the conscribed cross-polytope is $\frac{(2R+1)^n}{n!}$
while the volume of the inscribing {$n-$dimensional} cube is $\frac{(2R+1)^n}{\sqrt{w}^n}$.
If we choose $w=n$, i.e. a Hadamard matrix of
order $n$, then we get that the ratio between
the volumes of the $n$-dimensional cube and
the conscribed cross-polytope is $\frac{n!}{n^{n/2}}$.
The shape of the Lee sphere is very similar to the one
of the conscribed cross-polytope and hence a similar
result can be obtained for a Lee sphere. But, a continuous
shape like the conscribed cross-polytope is more
appropriate when we consider a continuous transformation.

%

%

\subsection{On the connection between $\C(m,j)$ and $\C_{2^j}^{\cH_m}$}

In this subsection we consider some interesting
connections between the code $\C(m,j)$ defined in Section~\ref{sec:double}
and the code $\C^{\cH_m}_{2^j}$ defined in
Theorem~\ref{thm:kernel_transW}, where the weighing matrix $\cW$ is
the Hadamard matrix $\cH_m$.

\begin{lemma}
\label{lem:divGm_j}
The inner product of a lattice point from $\Lambda(m,j)$ and a
lattice point from $\Lambda(m,m)$ is divisible by $2^j$.
\end{lemma}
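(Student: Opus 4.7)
By the unnamed lemma preceding the example, $\Lambda(m,m)=\Lambda(\cH_m)$, so it suffices to prove that $\langle \bfu,\bfv\rangle \equiv 0 \pmod{2^j}$ for every row $\bfu$ of $G(m,j)$ and every row $\bfv$ of $\cH_m$, and then invoke bilinearity to extend the conclusion from generators to arbitrary lattice points.

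The heart of the argument is an auxiliary claim which I would establish first by induction on $m$: \emph{if the $s$-th row $\bfh_s$ of $H_m$ has Hamming weight $2^\ell$, then $\langle \bfh_s,\bfv\rangle \in \{0,\pm 2^\ell\}$ for every row $\bfv$ of $\cH_m$.} The base case $m=0$ is trivial since both matrices equal $[1]$. For the inductive step I would use the recursion
$$
H_{m+1}=\begin{pmatrix} H_m & H_m \\ 0 & H_m \end{pmatrix}, \qquad
\cH_{m+1}=\begin{pmatrix} \cH_m & \cH_m \\ \cH_m & -\cH_m\end{pmatrix},
$$
and compute the four block-wise inner products. A ``top'' row of $H_{m+1}$ has the form $(\bfh,\bfh)$ with weight $2^{\ell+1}$; its inner products with $(\bfv,\bfv)$ and $(\bfv,-\bfv)$ give $2\langle \bfh,\bfv\rangle$ and $0$ respectively, each belonging to $\{0,\pm 2^{\ell+1}\}$ by the induction hypothesis. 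A ``bottom'' row has the form $(\zero,\bfh)$ with weight $2^{\ell}$; its inner products with the two forms of rows of $\cH_{m+1}$ are $\pm\langle \bfh,\bfv\rangle \in \{0,\pm 2^{\ell}\}$. This closes the induction.

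Having the auxiliary claim, the main statement is immediate from the construction of $G(m,j)$. Let $\bfu$ be the $s$-th row of $G(m,j)$ and let $2^\ell$ be the weight of the corresponding row $\bfh_s$ of $H_m$. If $\ell\ge j$, then $\bfu=\bfh_s$ and $\langle \bfu,\bfv\rangle$ is a multiple of $2^\ell$, hence of $2^j$. If $\ell<j$, then $\bfu = 2^{j-\ell}\,\bfh_s$, so $\langle \bfu,\bfv\rangle = 2^{j-\ell}\langle \bfh_s,\bfv\rangle$ is again a multiple of $2^{j-\ell}\cdot 2^\ell = 2^j$. Bilinearity finishes the proof: an arbitrary lattice point of $\Lambda(m,j)$ (respectively $\Lambda(m,m)$) is an integer combination of rows of $G(m,j)$ (respectively $\cH_m$), and each pairwise row-row inner product is already divisible by $2^j$.

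The only step that requires a little care is the induction for the auxiliary claim --- in particular tracking that the ``top'' rows pick up the extra factor of $2$ needed to match their doubled weight. Everything else is a clean bookkeeping of the scaling factors in the definition of $G(m,j)$.
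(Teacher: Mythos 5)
Your argument is correct. Both you and the paper reduce the lemma to row--row inner products via bilinearity and then handle the scaling factors $2^{j-\ell}$ in the definition of $G(m,j)$ in the same way; the difference lies in how the core divisibility fact is obtained. The paper observes that $\Lambda(m,m)=\Lambda(\cH_m)$ is ``$2^m$-self-orthogonal'' (inner products of its lattice points are divisible by $2^m$, a consequence of $\cH_m\cH_m^T=2^mI$ already recorded in Theorems~\ref{thm:LeeHad} and~\ref{thm:selfdual}), notes that each row of $G(m,j)$ is the corresponding row of $G(m,m)$ divided by $2^{m-\ell}$ (the ratio being read off from the row sums $2^\ell$ versus $2^m$), and divides through to conclude divisibility by $2^\ell\geq 2^j$. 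You instead prove the sharper auxiliary statement that a weight-$2^\ell$ row of $H_m$ has inner product in $\{0,\pm 2^\ell\}$ with every row of $\cH_m$, by a fresh induction on the doubling recursion; your block-wise computations are all correct. The paper's route is shorter because it reuses established results; yours is more self-contained, avoids invoking the self-duality machinery, and actually yields strictly more information (the exact value set of the inner products rather than mere divisibility), at the cost of one additional induction.
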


\begin{proof}
The lattice $\Lambda (m,m)$ is equal to the
lattice $\Lambda(\cH_m)$.
By Theorem~\ref{thm:LeeHad}, this lattice is reduced to a code,
in the Lee metric, over the alphabet $\Z_{2^m}$. It follows that the
inner product between lattice points of $\Lambda(m,m)$ is divisible by~$2^m$.
The sum of the entries in a row of $G(m,m)$ is exactly~$2^m$.
The sum of elements in a given row of $G(m,j)$ is~$2^\ell$,
for some~$\ell$ such that $j \leq \ell \leq m$. This row is obtained
by dividing the entries of the related row in $G(m,m)$ by $2^{m-\ell}$.
Hence, the inner product of this row with any lattice point of $\Lambda(m,m)$
is divisible by~$2^\ell$. Therefore, the inner product of any row in
$G(m,j)$ and a lattice point from $\Lambda(m,m)$ is divisible by~$2^j$.

Thus, the inner product of a lattice point from $\Lambda(m,j)$ and a lattice
point from $\Lambda (m,m)$ is divisible by $2^j$.
\end{proof}

\begin{cor}
\label{cor:divGm_j}
The inner product of a codeword from $\C(m,j)$ and a
codeword from $\C(m,m)$ is divisible by $2^j$.
\end{cor}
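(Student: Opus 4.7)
The plan is to deduce the corollary directly from Lemma \ref{lem:divGm_j} by lifting codewords to lattice representatives. Given codewords $\bfu \in \C(m,j)$ and $\bfv \in \C(m,m)$, I would choose any integer vectors $\bfx, \bfy \in \Z^{2^m}$ such that $\bfx \equiv \bfu \pmod{2^j}$ and $\bfy \equiv \bfv \pmod{2^m}$. The key preliminary step is to verify that these lifts actually lie in the corresponding lattices: that $\bfx \in \Lambda(m,j)$ and $\bfy \in \Lambda(m,m)$.

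For this verification I would use Theorem \ref{thm:Gij}, which tells us that $\C(m,j)$ is the reduction of $\Lambda(m,j)$ modulo $2^j$, and in particular that $2^j \bfe_i \in \Lambda(m,j)$ for every $i$; equivalently, $2^j \Z^{2^m} \subseteq \Lambda(m,j)$. Therefore, starting from any fixed preimage of $\bfu$ in $\Lambda(m,j)$, every other integer vector congruent to $\bfu$ modulo $2^j$ differs from it by an element of $2^j \Z^{2^m} \subseteq \Lambda(m,j)$ and is thus itself in $\Lambda(m,j)$. The identical argument, with $2^m$ in place of $2^j$, places $\bfy$ in $\Lambda(m,m)$.

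Now Lemma \ref{lem:divGm_j} applies and yields $2^j \mid \langle \bfx, \bfy\rangle$. It remains to observe that this conclusion is intrinsic to the codewords, i.e., independent of the lifts chosen: replacing $\bfx$ by $\bfx + 2^j \bfe$ changes $\langle \bfx, \bfy\rangle$ by $2^j \langle \bfe, \bfy\rangle$, which is itself divisible by $2^j$, and replacing $\bfy$ by $\bfy + 2^m \bff$ changes it by $2^m \langle \bfx, \bff\rangle$, divisible by $2^m$ and hence by $2^j$ since $j \leq m$. So the inner product $\langle \bfu, \bfv\rangle$ is well defined modulo $2^j$ and equals zero there, which is exactly the claim.

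There is no real obstacle here; the only point that requires a moment's care is confirming that \emph{any} integer lift of a codeword is a lattice point, which is immediate once one notices that the quotient $\Z^{2^m}/\Lambda(m,j)$ is a subgroup of $(\Z/2^j\Z)^{2^m}$. The corollary is therefore little more than a rephrasing of the lemma in the language of codes.
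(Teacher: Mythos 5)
Your proposal is correct and follows the same route the paper intends: the paper states the corollary with no separate proof, treating it as an immediate consequence of Lemma~\ref{lem:divGm_j}, and you simply fill in the routine details (any integer lift of a codeword lies in the corresponding lattice because $2^j\Z^{2^m}\subseteq\Lambda(m,j)$, and the inner product is well defined modulo $2^j$). Nothing is missing.
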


\begin{lemma}
\label{lem:EinC}
$\C(m,j) \subseteq \C^{\cH_m}_{2^j}$.
\end{lemma}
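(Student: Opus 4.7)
The plan is to lift the desired inclusion to the lattice level and then descend: I will show that $\Lambda(m,j) \subseteq \Lambda^{\cH_m}_{2^j}$, from which $\C(m,j) \subseteq \C^{\cH_m}_{2^j}$ follows by reducing both sides modulo $2^j$. This reduction is legitimate because both lattices contain $2^j \mathbf{e}_i$ for every $i$ (for $\Lambda(m,j)$ this is the third bullet of Theorem~\ref{thm:Gij}, and for $\Lambda^{\cH_m}_{2^j}$ it follows from Theorem~\ref{thm:kernel_transW} once one notes that $2^j$ divides $w=2^m$), so both are genuine periodic sublattices of $\Z^{2^m}$ with period $2^j$, and an inclusion of the lattices induces an inclusion of their images in $\Z_{2^j}^{2^m}$.

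To establish the lattice inclusion, fix an arbitrary $\bfx \in \Lambda(m,j)$. Unpacking the definition of $\Lambda^{\cH_m}_{2^j}$ from~(\ref{eq:TWs}), membership amounts to showing that every coordinate of $\cH_m \bfx$ is divisible by $2^j$, i.e.\ that the inner product of each row of $\cH_m$ with $\bfx$ lies in $2^j \Z$.

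Here the key observation is simply that every row of $\cH_m$ is itself a lattice point of $\Lambda(\cH_m) = \Lambda(m,m)$ (by the lemma preceding the example). Hence the required divisibility is precisely the content of Lemma~\ref{lem:divGm_j}: the inner product of any lattice point of $\Lambda(m,j)$ with any lattice point of $\Lambda(m,m)$ is divisible by $2^j$. Applying this row by row yields $\cH_m \bfx \in 2^j \Z^{2^m}$, hence $\bfx \in \Lambda^{\cH_m}_{2^j}$, proving $\Lambda(m,j) \subseteq \Lambda^{\cH_m}_{2^j}$ and thus the desired $\C(m,j) \subseteq \C^{\cH_m}_{2^j}$.

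Because Lemma~\ref{lem:divGm_j} does essentially all of the work, there is no real obstacle; the only point that needs a sentence of care is making sure that the reduction of each lattice to a code over $\Z_{2^j}$ is well-defined and that the inclusion of lattices really does descend to an inclusion of codes (not just, say, to one image being contained in the quotient of the other by a larger sublattice). This is handled by the remark above that both sides have period exactly $2^j$ in every coordinate.
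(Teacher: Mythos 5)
Your proof is correct and follows essentially the same route as the paper: the paper applies Corollary~\ref{cor:divGm_j} directly at the code level to conclude that the entries of $\cH_m \bfx^t$ are divisible by $2^j$ for $\bfx \in \C(m,j)$, whereas you invoke the underlying Lemma~\ref{lem:divGm_j} at the lattice level and then reduce modulo $2^j$. The extra care you take about the descent from lattices to codes is a harmless (and slightly more explicit) elaboration of the same argument.
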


\begin{proof}
By Corollary~\ref{cor:divGm_j}, the inner product between a codeword
of $\C(m,j)$ and a codeword of $\C(m,m)$ is divisible by~$2^j$.
Since $\cH_m$ is the generator matrix of $\C(m,m)$, it follows that
if $\bfx \in \C(m,j)$ then the entries of $\cH_m \bfx^t$ are divisible
by~$2^j$ and therefore $\bfx \in \C^{\cH_m}_{2^j}$.

Thus, $\C(m,j) \subseteq \C^{\cH_m}_{2^j}$.
\end{proof}

\begin{cor}
The covering radius of the code $\C^{\cH_m}_{2^j}$
is less than or equal to the
covering radius of the code $\C(m,j)$.
\end{cor}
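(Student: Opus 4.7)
The plan is to derive the corollary directly from the inclusion established in Lemma~\ref{lem:EinC}, using the standard fact that enlarging a code can only shrink its covering radius. Both $\C(m,j)$ and $\C^{\cH_m}_{2^j}$ are codes of the same length $2^m$ over the same alphabet $\Z_{2^j}$ (by Theorem~\ref{thm:Gij} and Theorem~\ref{thm:kernel_transW}, respectively), so the Lee distance is well defined between any point of $\Z_{2^j}^{2^m}$ and either of them. Lemma~\ref{lem:EinC} supplies the key containment $\C(m,j) \subseteq \C^{\cH_m}_{2^j}$, meaning that $\C^{\cH_m}_{2^j}$ has at least as many codewords available in every neighborhood as $\C(m,j)$.

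The argument I would write is a two-line distance comparison. First, I would fix an arbitrary word $\bfx \in \Z_{2^j}^{2^m}$ and let $\bfc \in \C(m,j)$ be a codeword realizing the minimum Lee distance from $\bfx$ to $\C(m,j)$, so that $d_L(\bfx,\bfc) = d_L(\bfx, \C(m,j))$. By Lemma~\ref{lem:EinC}, $\bfc$ also lies in $\C^{\cH_m}_{2^j}$, which gives
\begin{equation*}
d_L(\bfx, \C^{\cH_m}_{2^j}) \;\leq\; d_L(\bfx,\bfc) \;=\; d_L(\bfx, \C(m,j)).
\end{equation*}
Taking the maximum over all $\bfx \in \Z_{2^j}^{2^m}$ on both sides yields the desired inequality between covering radii.

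There is no genuine obstacle here: the content of the corollary is already fully captured by Lemma~\ref{lem:EinC}, and the step from subcode inclusion to covering-radius inequality is a one-line consequence of the definitions. The only thing worth stressing, if anything, is that both codes live in the same ambient metric space $\Z_{2^j}^{2^m}$, so that the two covering radii are being measured against the same set of possible error patterns and a direct comparison is legitimate.
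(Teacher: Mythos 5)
Your proof is correct and matches the paper's intent exactly: the paper states this as an immediate corollary of Lemma~\ref{lem:EinC} without further argument, and your two-line distance comparison is precisely the standard justification that a supercode has covering radius at most that of any subcode. Nothing further is needed.
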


\begin{conj}
\label{con:EequalC}
$\C(m,j) = \C^{\cH_m}_{2^j}$.
\end{conj}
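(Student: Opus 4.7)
By Lemma~\ref{lem:EinC} the inclusion $\C(m,j)\subseteq \C^{\cH_m}_{2^j}$ (equivalently $\Lambda(m,j)\subseteq\Lambda^{\cH_m}_{2^j}$) is already known, so the plan is to prove the reverse inclusion $\Lambda^{\cH_m}_{2^j}\subseteq \Lambda(m,j)$ by strong induction on $m$, with the statement quantified over all $0\leq j\leq m$ simultaneously. The base cases $m=0$ and $j=0$ are immediate (in the latter, $\Lambda^{\cH_m}_{1}=\Z^{2^m}=\Lambda(m,0)$), so the content lies in the inductive step.

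For the inductive step, given $\bfx=(\bfa,\bfb)\in\Lambda^{\cH_m}_{2^j}$ with $\bfa,\bfb\in\Z^{2^{m-1}}$, I would exploit the doubling construction
$$
\cH_m=\left[\begin{array}{cc}\cH_{m-1}&\cH_{m-1}\\ \cH_{m-1}&-\cH_{m-1}\end{array}\right]
$$
to rewrite the condition $\cH_m\bfx\in 2^j\Z^{2^m}$ as the pair
$$
\cH_{m-1}(\bfa+\bfb)\in 2^j\Z^{2^{m-1}},\qquad \cH_{m-1}(\bfa-\bfb)\in 2^j\Z^{2^{m-1}},
$$
i.e.\ $\bfa\pm\bfb\in\Lambda^{\cH_{m-1}}_{2^j}$. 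Summing the two conditions yields $2\cH_{m-1}\bfa\in 2^j\Z^{2^{m-1}}$, hence $\bfa\in\Lambda^{\cH_{m-1}}_{2^{j-1}}$ (using $j\geq 1$). For $1\leq j<m$ the inductive hypothesis, applied at level $m-1$ once with parameter $j$ and once with parameter $j-1$, gives $\bfa\pm\bfb\in\Lambda(m-1,j)$ and $\bfa\in\Lambda(m-1,j-1)$. Now the recursive definition
$G(m,j)=\left[\begin{smallmatrix}G(m-1,j-1)&G(m-1,j-1)\\ 0&G(m-1,j)\end{smallmatrix}\right]$ shows that $\Lambda(m,j)=\{(\bfu,\bfu+\bfv):\bfu\in\Lambda(m-1,j-1),\ \bfv\in\Lambda(m-1,j)\}$, and with $\bfu=\bfa$ and $\bfv=\bfb-\bfa=-(\bfa-\bfb)$ we obtain $\bfx\in\Lambda(m,j)$.

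The boundary case $j=m$ needs separate treatment because the induction cannot be invoked at level $(m-1,j)=(m-1,m)$. The key auxiliary claim I would establish here is
$$\Lambda^{\cH_{m-1}}_{2^m}\;=\;2\Lambda(m-1,m-1).$$
This follows directly by writing any $\bfy\in\Lambda(m-1,m-1)=\cH_{m-1}^T\Z^{2^{m-1}}$ as $\bfy=\cH_{m-1}^T\bfu$ and using $\cH_{m-1}\cH_{m-1}^T=2^{m-1}I$, so the $2^m$-divisibility of $\cH_{m-1}\bfy=2^{m-1}\bfu$ is equivalent to $\bfu\in 2\Z^{2^{m-1}}$. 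Applying this, I obtain $\bfa+\bfb=2\bfc^+$ and $\bfa-\bfb=2\bfc^-$ with $\bfc^\pm\in\Lambda(m-1,m-1)$; hence $\bfa=\bfc^++\bfc^-\in\Lambda(m-1,m-1)$ and $\bfb-\bfa=-2\bfc^-\in 2\Lambda(m-1,m-1)$. These fit the recursive decomposition $\Lambda(m,m)=\{(\bfu,\bfu+\bfv):\bfu\in\Lambda(m-1,m-1),\ \bfv\in 2\Lambda(m-1,m-1)\}$ read off from $G(m,m)=\left[\begin{smallmatrix}G(m-1,m-1)&G(m-1,m-1)\\ 0&2G(m-1,m-1)\end{smallmatrix}\right]$, so $\bfx\in\Lambda(m,m)$.

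The main obstacle is organizational rather than computational: one must weave together two instances of the inductive hypothesis (at parameters $j$ and $j-1$) in the generic case, and separately handle the boundary case $j=m$, where no inductive hypothesis is available at the desired parameter and one must instead verify the self-dual-style identity $\Lambda^{\cH_{m-1}}_{2^m}=2\Lambda(m-1,m-1)$. Once that auxiliary identity is in place, all the pieces fit the block-triangular form of $G(m,j)$, and the induction closes cleanly.
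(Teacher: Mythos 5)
You should first note that the paper does not prove this statement at all: it is posed as Conjecture~\ref{con:EequalC} and reappears as open problem~3 in Section~\ref{sec:conclude}. The paper establishes only the inclusion $\C(m,j)\subseteq\C^{\cH_m}_{2^j}$ (Lemma~\ref{lem:EinC}) and the duality statement that the cases $j$ and $m-j$ are equivalent. Your argument therefore goes beyond the paper, and having checked it step by step I believe it is essentially correct: the doubling identity $\cH_m(\bfa,\bfb)^t=\bigl(\cH_{m-1}(\bfa+\bfb),\,\cH_{m-1}(\bfa-\bfb)\bigr)^t$ does translate membership in $\Lambda^{\cH_m}_{2^j}$ into the two conditions you state; summing gives $\bfa\in\Lambda^{\cH_{m-1}}_{2^{j-1}}$; and the block form of $G(m,j)$ does yield exactly the decomposition $\Lambda(m,j)=\{(\bfu,\bfu+\bfv):\bfu\in\Lambda(m-1,j-1),\ \bfv\in\Lambda(m-1,j)\}$, into which $\bfu=\bfa$, $\bfv=\bfb-\bfa$ fits. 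I verified the volumes agree for $(m,j)=(3,2)$ and $(4,2)$ independently, which is consistent with your conclusion.

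Two small points need tightening. First, in the boundary case your derivation of $\Lambda^{\cH_{m-1}}_{2^m}=2\Lambda(m-1,m-1)$ starts by writing an arbitrary element as $\cH_{m-1}^T\bfu$, which presupposes $\Lambda^{\cH_{m-1}}_{2^m}\subseteq\Lambda(m-1,m-1)$; you should state explicitly that divisibility by $2^m$ implies divisibility by $2^{m-1}$, so that $\Lambda^{\cH_{m-1}}_{2^m}\subseteq\Lambda^{\cH_{m-1}}_{2^{m-1}}=\Lambda(\cH_{m-1})=\Lambda(m-1,m-1)$ by Theorem~\ref{thm:CW} and the unnumbered lemma of Section~\ref{sec:double}. (Note also that $s=2^m$ does not divide $w=2^{m-1}$, so this lattice falls outside the hypotheses of Theorem~\ref{thm:kernel_transW}; you use it only as an auxiliary set and prove what you need directly, which is fine, but say so.) Second, the induction bookkeeping should be made explicit: the generic step at $(m,j)$ with $j=m-1$ invokes the hypothesis at $(m-1,m-1)$, which is itself a boundary case, so the boundary argument must be carried along at every level of the induction --- your formulation ``all $0\le j\le m$ simultaneously'' does this, but it is the load-bearing phrase. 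With these clarifications the argument closes, and it would upgrade the conjecture to a theorem; given that the authors left this open, the proof deserves one more independent check, but I find no gap in it.
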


For the next result, we need one more definition and a few observations.
For a given word $\bfx=(x_1,\ldots,x_n)$, the \emph{reverse} of $\bfx$, $\bfx^R$,
is the word obtained from $\bfx$ by reading its elements from the last to the
first, i.e. $\bfx^R \deff (x_n,\ldots, x_1)$. It is readily verified that
for each $j$, $1 \leq j \leq m$, $\bfx \in \C(m,j)$ if and only if
$\bfx^R \in \C(m,j)$. Moreover, if we take the matrix which consists of
all the reverse rows of $G(m,j)$ we will obtain another generator
matrix for $\C(m,j)$.

\begin{lemma}
\label{lem:which_mult}
If $\bfx$, the $i$-th row of the matrix $H_m$, has Manhattan weight $2^\ell$ then
$\bfx^R \cdot \cH_m$ is a multiple by $2^\ell$
of the reverse for the $(2^m+1-i)$-th row of the matrix $H_m$.
\end{lemma}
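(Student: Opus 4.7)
The plan is to prove the identity by induction on $m$, exploiting the parallel block recursions for $H_{m+1}$ and $\cH_{m+1}$ given earlier in the section. The base case $m=1$ is immediate by inspection of the two $2\times 2$ matrices. The key observation driving the inductive step is that, for any concatenation $(\bfa,\bfb)$ of two vectors of length $2^m$, one has $(\bfa,\bfb)^R=(\bfb^R,\bfa^R)$; that is, reversing a row of $H_{m+1}$ swaps its two halves and reverses each half. This is exactly the manipulation that meshes with the block product against $\cH_{m+1}=\left[\begin{array}{cc}\cH_m & \cH_m\\ \cH_m & -\cH_m\end{array}\right]$.

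For the inductive step, let $\bfh_j$ denote the $j$-th row of $H_m$ and let $2^{\ell_j}$ be its Manhattan weight. Split on whether the index $i$ of the chosen row of $H_{m+1}$ lies in the top half or the bottom half. When $1\leq i\leq 2^m$, the row is $(\bfh_i,\bfh_i)$ with weight $2^{\ell_i+1}$, its reverse is $(\bfh_i^R,\bfh_i^R)$, and the block product gives $(\bfh_i^R,\bfh_i^R)\cdot\cH_{m+1}=(2\,\bfh_i^R\cdot\cH_m,\zero)$. Applying the induction hypothesis to $\bfh_i^R\cdot\cH_m$ yields $(2^{\ell_i+1}\bfh_{2^m+1-i}^R,\zero)$. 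On the other side, the row of $H_{m+1}$ indexed by $2^{m+1}+1-i$ lies in the bottom half, equals $(\zero,\bfh_{2^m+1-i})$, and therefore has reverse $(\bfh_{2^m+1-i}^R,\zero)$, so the two sides agree. The case $2^m<i\leq 2^{m+1}$ is analogous: the row is $(\zero,\bfh_{i-2^m})$ with weight $2^{\ell_{i-2^m}}$, the reverse is $(\bfh_{i-2^m}^R,\zero)$, and the block product equals $(\bfh_{i-2^m}^R\cdot\cH_m,\,\bfh_{i-2^m}^R\cdot\cH_m)$, which by induction is $2^{\ell_{i-2^m}}$ times the reverse of the top-half row $(\bfh_{2^{m+1}+1-i},\bfh_{2^{m+1}+1-i})$ of $H_{m+1}$.

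The only genuine bookkeeping is to verify that the involution $i\mapsto 2^{m+1}+1-i$ swaps the top and bottom halves of $H_{m+1}$ in exactly the way produced by the reversal of a row; this is what makes the two cases match without extra adjustment. The weight factors $2^{\ell_i+1}$ and $2^{\ell_{i-2^m}}$ appear automatically from the block structure, the first because the top-half row has doubled Manhattan weight relative to its $H_m$-origin, and the second because the bottom-half row retains the same weight. Beyond tracking these indices and exponents, the proof is mechanical, and I do not anticipate a real obstacle.
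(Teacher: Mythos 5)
Your proof is correct and follows essentially the same route as the paper's: induction on $m$ using the block recursions for $H_{m+1}$ and $\cH_{m+1}$ together with the identity $(\bfa,\bfb)^R=(\bfb^R,\bfa^R)$ and the computation $(\bfy,\bfy)\cdot\cH_{m+1}=(2\,\bfy\cH_m,\zero)$. The only differences are cosmetic: you write out both the top-half and bottom-half cases explicitly (the paper presents only $1\le i\le 2^m$ and leaves the other half implicit) and you start the induction at $m=1$ rather than $m=2$; both choices are fine.
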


\begin{proof}
The proof is by induction on $m$. The trivial basis is $m=2$.
Assume the claim is true when the matrices involved are $H_m$ and
$\cH_m$. Let $\bfx = ( \tilde{\bfx} , \tilde{\bfx} )$ be the $i$-th row of $H_{m+1}$,
and let $\bfv$ be the $(2^m+1-i)$-th row of $\cH_m$,
$1 \leq i \leq 2^m$. If the weight of $\tilde{\bfx}$ is $2^\ell$
then the weight of $( \tilde{\bfx} , \tilde{\bfx} )$ is $2^{\ell+1}$.
$\tilde{\bfx}$ is the ${i-}$th row of $H_m$ and hence by
the induction hypothesis $\tilde{\bfx}^R \cdot \cH_m$ is a multiple by $2^\ell$ of
$\bfv^R$. It is also easy to verify that if $\bfz = \bfy \cdot \cH_m$ then
$( 2\bfz,\zero) = (\bfy,\bfy) \cdot \cH_{m+1}$. The $(2^{m+1}+1-i)$-th
row of $\cH_{m+1}$ is $(\zero,\bfv)$ and hence
$( \tilde{\bfx} , \tilde{\bfx} )^R  \cdot \cH_{m+1}$
is a multiple by $2^{\ell+1}$ of $(\bfv^R ,\zero)$.
\end{proof}

A simple proof by induction on $m$ using the structure of $H_m$ can
be given for the following lemma.
\begin{lemma}
\label{lem:weights_H}
If the $i$-th row of $H_m$ has weight $2^\ell$ then the $(2^m+1-i)$-th
row of $H_m$ has weight $2^{m-\ell}$.
\end{lemma}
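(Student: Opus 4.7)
The plan is to prove this by induction on $m$, leveraging the recursive block structure $H_{m+1} = \begin{bmatrix} H_m & H_m \\ 0 & H_m \end{bmatrix}$. The base case $m=0$ is trivial (the unique row has weight $1 = 2^0$ and the claim reads $2^{0-0}=1$), and I would quickly check $m=1$ as a sanity test.

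For the inductive step, assume the statement holds for $H_m$ and consider the $i$-th row of $H_{m+1}$, where $1 \leq i \leq 2^{m+1}$. Writing $\bfy_t$ for the $t$-th row of $H_m$, the block structure gives a clean dichotomy:
\begin{itemize}
\item If $1 \leq i \leq 2^m$, then the $i$-th row of $H_{m+1}$ is $(\bfy_i,\bfy_i)$, of weight $2^{\ell+1}$ whenever $\bfy_i$ has weight $2^\ell$. The symmetric index is $2^{m+1}+1-i = 2^m + (2^m+1-i)$, which lies in the lower block, so the corresponding row is $(\zero,\bfy_{2^m+1-i})$.
\item If $2^m < i \leq 2^{m+1}$, write $i = 2^m + i'$ with $1 \leq i' \leq 2^m$. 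The row is $(\zero,\bfy_{i'})$ of weight $2^\ell$, and the symmetric index is $2^{m+1}+1-i = 2^m+1-i'$, which lies in the upper block, yielding row $(\bfy_{2^m+1-i'},\bfy_{2^m+1-i'})$.
\end{itemize}

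In both cases, the induction hypothesis applied to $\bfy_{2^m+1-i}$ (or $\bfy_{2^m+1-i'}$) gives that its weight is $2^{m-\ell}$. Multiplying by the appropriate factor of $1$ or $2$ coming from whether the corresponding row of $H_{m+1}$ lives in the lower or upper block, one checks in each case that the weight of the symmetric row is $2^{(m+1)-k}$, where $2^k$ is the weight of the $i$-th row. Since both cases check out symmetrically, the induction is complete.

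There is no real obstacle here: the argument is a routine bookkeeping exercise. The only thing worth stating explicitly is the index arithmetic $2^{m+1}+1-i$ maps the upper block bijectively and in reversed order onto the lower block (and vice versa), which is precisely what makes the doubling/halving of weights pair up correctly with the reversal of indices.
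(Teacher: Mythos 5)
Your proof is correct and takes exactly the approach the paper indicates (the paper merely remarks that "a simple proof by induction on $m$ using the structure of $H_m$ can be given" and omits the details). Your two-case index bookkeeping on the block decomposition of $H_{m+1}$ fills in precisely what the authors left to the reader.
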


\begin{lemma}
$T^{\cH_m}_{2^j} ( \Lambda (m,j)) = \Lambda (m,m-j)$.
\end{lemma}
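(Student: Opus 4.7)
The strategy is to compute the image under $T^{\cH_m}_{2^j}$ of a convenient generating set for $\Lambda(m,j)$ and verify that it is, row for row, a generating set for $\Lambda(m,m-j)$. Two preliminary facts set the stage. First, the Sylvester matrix $\cH_m$ is symmetric (an easy induction from the doubling construction), so for any row vector $\bfy$ one has $\cH_m\,\bfy^{\,t} = (\bfy\cdot \cH_m)^t$. Second, as noted just before Lemma~\ref{lem:which_mult}, replacing each row of $G(m,j)$ by its reverse produces another generator matrix for $\Lambda(m,j)$, and the same is true for $\Lambda(m,m-j)$; I will work throughout with these ``reversed'' generator matrices.

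For $1\le i\le 2^m$, let $\bfu_i$ be the $i$-th row of $H_m$, with Hamming weight $2^{\ell_i}$, and let $\bfv_i$ be the $(2^m{+}1{-}i)$-th row. By the definition of $G(m,j)$, the $i$-th reversed row is $\alpha_i \bfu_i^R$ with $\alpha_i = 2^{\max(0,\, j-\ell_i)}$. Applying $T^{\cH_m}_{2^j}$ to the associated column vector and using symmetry,
\[
T^{\cH_m}_{2^j}\bigl((\alpha_i \bfu_i^R)^t\bigr) \;=\; \frac{\alpha_i\,(\bfu_i^R\cdot \cH_m)^t}{2^j}.
\]
Lemma~\ref{lem:which_mult} rewrites $\bfu_i^R\cdot \cH_m$ as $2^{\ell_i}\bfv_i^R$, and Lemma~\ref{lem:weights_H} supplies the Hamming weight $2^{\,m-\ell_i}$ of $\bfv_i$. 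The image therefore simplifies to $2^{\,\max(0,\,j-\ell_i)+\ell_i-j}\,(\bfv_i^R)^t = 2^{\,\max(0,\,\ell_i-j)}\,(\bfv_i^R)^t$, using the identity $\max(0,a)+(-a)=\max(0,-a)$.

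It remains to identify this with the $(2^m{+}1{-}i)$-th reversed row of $G(m,m-j)$, which is $\beta_i\,\bfv_i^R$ with $\beta_i = 2^{\max(0,\,(m-j)-(m-\ell_i))}=2^{\,\max(0,\,\ell_i-j)}$. The exponents agree, and since $i\mapsto 2^m{+}1{-}i$ is a permutation of the row indices, $T^{\cH_m}_{2^j}$ sends the reversed rows of $G(m,j)$ bijectively to the reversed rows of $G(m,m-j)$. By $\Z$-linearity of $T^{\cH_m}_{2^j}$, the image of the lattice generated by the former is the lattice generated by the latter; this yields $T^{\cH_m}_{2^j}(\Lambda(m,j))=\Lambda(m,m-j)$.

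The main bookkeeping obstacle is reconciling the two scaling exponents $\alpha_i$ (attached to index $i$ on the source side) and $\beta_i$ (attached to the complementary index on the target side); the collapse $\max(0,j-\ell_i)+\ell_i-j=\max(0,\ell_i-j)$ is precisely what makes them match, and passing to the reversed generator matrices is the device that lets Lemma~\ref{lem:which_mult} supply this scaling directly on the target side.
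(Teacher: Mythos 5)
Your proof is correct and follows essentially the same route as the paper's: both rely on the reversed-generator-matrix observation together with Lemma~\ref{lem:which_mult} and Lemma~\ref{lem:weights_H} to map the rows of $G(m,j)$ onto the rows of $G(m,m-j)$, and then conclude by linearity. Your only deviations are presentational — you merge the paper's two cases ($\ell\ge j$ and $\ell<j$) into a single computation via the $\max(0,\cdot)$ exponents and you make the symmetry of $\cH_m$ explicit to reconcile row- and column-vector conventions — both of which are harmless refinements of the same argument.
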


\begin{proof}
If $\bfx$, the $i$-th row of $H_m$, has weight $2^\ell$, $\ell \geq j$, then
by definition the $i$-th row of $G(m,j)$ is equal~$\bfx$. By Lemma~\ref{lem:which_mult},
we have that $\bfx^R \cdot \cH_m$ is a multiple by $2^\ell$ of
the reverse of the $(2^m+1-i)$-th row of $H_m$. Hence $T^{\cH_m}_{2^j} (\bfx^R)$ is
a multiple by $2^{\ell-j}$ of the reverse of the $(2^m+1-i)$-th row of $H_m$.
By Lemma~\ref{lem:weights_H}, the
$(2^m+1-i)$-th row of $H_m$ has weight $2^{m-\ell}$. Therefore, the
Manhattan weight of $T^{\cH_m}_{2^j} (\bfx^R)$ is $2^{m-j}$ which implies
by the definition of $G(m,m-j)$ that $T^{\cH_m}_{2^j} (\bfx^R)$ is a row in $G(m,m-j)$.

If $\bfx$, the $i$-th row of $H_m$, has weight $2^\ell$, $\ell < j$, then
the $i$-th row of $G(m,j)$ is equal $2^{j-\ell} \bfx$ and its weight
is~$2^j$. By Lemma~\ref{lem:which_mult}
we have that $\bfx^R \cdot \cH_m$ is a multiple by $2^j$ of
the reverse of the $(2^m+1-i)$-th row of $H_m$.
Hence $T^{\cH_m}_{2^j} (2^{j-\ell} \bfx^R)$ is
the reverse of the $(2^m+1-i)$-th row of $H_m$.
By Lemma~\ref{lem:weights_H} the
$(2^m+1-i)$-th row of $H_m$ has weight $2^{m-\ell} > 2^{m-j}$. Therefore,
by the definition of $G(m,m-j)$ we
have that $T^{\cH_m}_{2^j} (2^{j-\ell} \bfx^R)$ is a row in $G(m,m-j)$.

We have shown that a basis of $\Lambda (m,j)$ is transformed by the
transformation $T^{\cH_m}_{2^j}$ to a basis of $\Lambda(m,m-j)$.
Since $\Lambda(m,j)$ and $\Lambda(m,m-j)$ are lattices,
and $T^{\cH_m}_{2^j}$ is a linear transformation,
it implies that $T^{\cH_m}_{2^j} ( \Lambda (m,j)) = \Lambda (m,m-j)$.
\end{proof}

\begin{cor}
$\C^{\cH_m}_{2^j} = \C(m,j)$ if and only if $\C^{\cH_m}_{2^{m-j}} = \C(m,m-j)$.
\end{cor}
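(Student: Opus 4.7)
The plan is to reduce the corollary to an equality of lattices and then combine the immediately preceding lemma with the involution property given by Theorem~\ref{thm:kernel_transW}. Since the codes $\C(m,j)$ and $\C^{\cH_m}_{2^j}$ are both codes of length $2^m$ over $\Z_{2^j}$, each being the reduction of the unique sublattice of $\Z^{2^m}$ that generates it, the statement $\C^{\cH_m}_{2^j}=\C(m,j)$ is equivalent to $\Lambda^{\cH_m}_{2^j}=\Lambda(m,j)$, and likewise for the pair at index $m-j$. So I would first recast both sides of the ``if and only if'' in terms of lattices.

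Next, I would invoke the previously proved lemma $T^{\cH_m}_{2^j}(\Lambda(m,j))=\Lambda(m,m-j)$, together with the special case of Theorem~\ref{thm:kernel_transW} for $\cW=\cH_m$, $w=2^m$, $s=2^j$, which gives $\Lambda^{\cH_m}_{2^{m-j}} = T^{\cH_m}_{2^j}(\Lambda^{\cH_m}_{2^j})$ (recall $w/s = 2^{m-j}$). Applying the linear map $T^{\cH_m}_{2^j}$ to both sides of the assumed equality $\Lambda^{\cH_m}_{2^j}=\Lambda(m,j)$ then yields
$$\Lambda^{\cH_m}_{2^{m-j}} \;=\; T^{\cH_m}_{2^j}\!\bigl(\Lambda^{\cH_m}_{2^j}\bigr) \;=\; T^{\cH_m}_{2^j}\!\bigl(\Lambda(m,j)\bigr) \;=\; \Lambda(m,m-j),$$
which is the reverse direction of the corollary.

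For the forward direction I would simply apply the same argument with the roles of $j$ and $m-j$ swapped: the lemma used symmetrically gives $T^{\cH_m}_{2^{m-j}}(\Lambda(m,m-j))=\Lambda(m,j)$, while Theorem~\ref{thm:kernel_transW} with $s=2^{m-j}$ gives $\Lambda^{\cH_m}_{2^j}=T^{\cH_m}_{2^{m-j}}(\Lambda^{\cH_m}_{2^{m-j}})$, so assuming $\Lambda^{\cH_m}_{2^{m-j}}=\Lambda(m,m-j)$ immediately produces $\Lambda^{\cH_m}_{2^j}=\Lambda(m,j)$. There is essentially no obstacle here; the only thing to verify carefully is the bookkeeping between the parameter $s$ in Theorem~\ref{thm:kernel_transW} and the complementary parameter $w/s=2^{m-j}$, so that the two appearances of the transformation $T^{\cH_m}_{\cdot}$ line up correctly and the lemma can be substituted on the right-hand side in each direction.
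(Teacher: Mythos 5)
Your proof is correct and follows essentially the same route the paper intends: the corollary is stated there without proof as an immediate consequence of the preceding lemma $T^{\cH_m}_{2^j}(\Lambda(m,j))=\Lambda(m,m-j)$ combined with the identity $\Lambda^\cW_s = T^\cW_{w/s}(\Lambda^\cW_{w/s})$ from Theorem~\ref{thm:kernel_transW}, exactly as you argue (after the harmless reduction from codes to lattices, which is justified because both lattices contain $2^j\mathbf{e}_i$ for every $i$). The only nitpick is a labeling slip: the identity $\Lambda^{\cH_m}_{2^{m-j}} = T^{\cH_m}_{2^j}(\Lambda^{\cH_m}_{2^j})$ is the theorem instantiated with $s=2^{m-j}$ (so that $w/s=2^j$), not with $s=2^j$; the identity itself is right and the argument goes through.
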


\subsection{The Discrete Transformation}

For the discrete case we want to modify the
transformation~$T^\cW$, used for the continuous case.
Let $D$ be a positive integer and
$\cW$ a symmetric weighing matrix of
order $n$ and weight $w=D^2$. Let $\dS$ be the set of
coset leaders of the lattice $\Lambda^\cW$ defined in
Theorem~\ref{thm:kernel_transW} based on~(\ref{eq:TW}). The discrete transformation
${\bf \tilde{T}}^\cW:\mathbb{Z}^n\rightarrow \mathbb{Z}^n$ is defined
as follows. For each ${(x_1,\ldots,x_n)\in \mathbb{Z}^n}$,
let $(x_1,\ldots,x_n) = (c_1,\ldots,c_n) +
(s_1,\ldots,s_n)$, where $(c_1,\ldots,c_n)^t \in \Lambda^\cW$ and
${(s_1,\ldots,s_n)^t \in \dS}$.
The choice of the pair $(c_1,\ldots,c_n)$ and
$(s_1,\ldots,s_n)$ is unique once the set of coset leaders~$\dS$ is defined. Let
\begin{align*}
{\bf \tilde{T}}^\cW ((x_1,\ldots,x_n)^t) =
T^\cW ((c_1,\ldots,c_n)^t)+ (s_1,\ldots,s_n)^t ,
\end{align*}
where $T^\cW$ is defined in~(\ref{eq:TW}).
\begin{lemma}
\label{lem:involutionDW}
For each $\bfx =(x_1,\ldots,x_n)^t \in \mathbb{Z}^n$,
$${\bf \tilde{T}}^\cW ({\bf \tilde{T}}^\cW (\bfx)) = \bfx .$$
\end{lemma}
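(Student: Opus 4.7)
The plan is to show that applying $\tilde{T}^\cW$ twice recovers the original vector by tracking what happens to the lattice part and the coset-leader part separately. Write $\bfx = \bfc + \bfs$ with $\bfc \in \Lambda^\cW$ (viewed as a column vector) and $\bfs \in \dS$, so by definition $\tilde{T}^\cW(\bfx) = T^\cW(\bfc) + \bfs$.

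First I would verify that $T^\cW(\bfc) + \bfs$ is already in the canonical form required by the definition of $\tilde T^\cW$. By Theorem~\ref{thm:kernel_transW} applied with $s = \sqrt{w} = D$, we have $\Lambda^\cW = T^\cW(\Lambda^\cW)$, so $T^\cW(\bfc) \in \Lambda^\cW$. Hence $T^\cW(\bfc) + \bfs$ lies in the same coset of $\Lambda^\cW$ as $\bfs$, and since $\bfs \in \dS$ was chosen as the coset leader of that coset, the unique decomposition of $\tilde T^\cW(\bfx)$ as lattice point plus coset leader is precisely $T^\cW(\bfc) + \bfs$.

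Next I would apply $\tilde T^\cW$ a second time. Using the decomposition just established,
\[
\tilde T^\cW(\tilde T^\cW(\bfx)) \;=\; T^\cW\bigl(T^\cW(\bfc)\bigr) + \bfs.
\]
By Lemma~\ref{lem:s_involutionW} applied to the symmetric weighing matrix $\cW$ with $s = w/s = \sqrt{w}$, the continuous map $T^\cW$ is an involution: $T^\cW(T^\cW(\bfc)) = \bfc$. Substituting gives $\tilde T^\cW(\tilde T^\cW(\bfx)) = \bfc + \bfs = \bfx$, as desired.

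The only delicate point is the verification that the coset leader is preserved under one application of $\tilde T^\cW$, since the whole argument collapses if the second invocation of $\tilde T^\cW$ chooses a different representative; but this is immediate from $T^\cW(\bfc) \in \Lambda^\cW$ together with the uniqueness of the coset-leader decomposition. Everything else reduces to the involution property of $T^\cW$ on $\R^n$, which has already been proved.
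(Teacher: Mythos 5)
Your proof is correct and follows essentially the same route as the paper's: decompose $\bfx=\bfc+\bfs$, use $\Lambda^\cW=T^\cW(\Lambda^\cW)$ from Theorem~\ref{thm:kernel_transW} to see that $T^\cW(\bfc)+\bfs$ decomposes with the same coset leader $\bfs$, and then invoke the involution property of $T^\cW$ from Lemma~\ref{lem:s_involutionW} for symmetric $\cW$. Your explicit remark that the coset-leader part must be preserved under one application of ${\bf \tilde{T}}^\cW$ is exactly the step the paper handles by citing the definition of ${\bf \tilde{T}}^\cW$, so nothing is missing.
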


\begin{proof}
Let $\bfx =(x_1,\ldots,x_n)^t \in \mathbb{Z}^n$ be a point such that
$(x_1,\ldots,x_n)=(c_1,\ldots,c_n) + (s_1,\ldots,s_n)$,
where $(c_1,\ldots,c_n)^t \in \Lambda^\cW$ and
$(s_1,\ldots,s_n)^t \in \dS$. By the definition
of ${\bf \tilde{T}}^\cW$ we have that
\begin{align*}
& {\bf \tilde{T}}^\cW ({\bf \tilde{T}}^\cW ( \bfx) )
= {\bf \tilde{T}}^\cW (T^\cW((c_1,\ldots,c_n)^t) + (s_1,\ldots,s_n)^t)~.&
\end{align*}
Since $(c_1,\ldots,c_n)^t \in \Lambda^\cW$ it follows
by Theorem~\ref{thm:kernel_transW}
that $T^\cW((c_1,\ldots,c_n)^t) \in \Lambda^\cW$
and hence by the definition
of ${\bf \tilde{T}}^\cW$ we have that
\begin{align*}
& {\bf \tilde{T}}^\cW (T^\cW((c_1,\ldots,c_n)^t) + (s_1,\ldots,s_n)^t) & \\
& = T^\cW(T^\cW((c_1,\ldots,c_n)^t) ) + (s_1,\ldots,s_n)^t~.&
\end{align*}
Finally, by Lemma~\ref{lem:s_involutionW} we have that
\begin{align*}
& T^\cW(T^\cW((c_1,\ldots,c_n)^t)) + (s_1,\ldots,s_n)^t & \\
& = (c_1,\ldots,c_n)^t + (s_1,\ldots,s_n)^t = (x_1,\ldots,x_n)^t.&
\end{align*}
Thus,
$${\bf \tilde{T}}^\cW ({\bf \tilde{T}}^\cW ( \bfx )) = \bfx .$$
\end{proof}

\begin{theorem}
\label{thm:Dinscribed}
Let $\rho$ be the covering radius of the lattice $\Lambda^\cW$.
A Lee sphere with radius $R$ is inscribed after
the transformation~${\bf \tilde{T}}^\cW$,
inside an $n$-dimensional cube of size
$$\left(2 \left\lfloor \frac{ R+\rho }{D} \right\rfloor
+2\rho+1\right)\times \cdots \times \left(2\left\lfloor\frac{ R+\rho }{D} \right\rfloor+2\rho+1\right).$$
\end{theorem}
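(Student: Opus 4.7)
My plan is to split the image $\tilde{T}^\cW(\bfx)=T^\cW(\bfc_x)+\bfs_x$ into its integer lattice part and its small coset-leader part, and to bound each coordinate of the two pieces separately. The lattice piece will contribute at most $\lfloor(R+\rho)/D\rfloor$ per coordinate once the $\cW$-action is divided by $D$, and the coset-leader piece will contribute at most $\rho$, yielding the claimed cube side $2\lfloor(R+\rho)/D\rfloor+2\rho+1$. To set up the estimate I first reduce to a Lee sphere centered at the origin. Using $\tilde{T}^\cW(\bfx+\bft)=\tilde{T}^\cW(\bfx)+T^\cW(\bft)$ for every $\bft\in\Lambda^\cW$ (a direct consequence of the uniqueness of the coset decomposition), translating the Lee sphere by a lattice vector shifts its image by a fixed integer vector and does not change the size of any enclosing cube, so I may take the sphere to be centered at $\zero\in\Lambda^\cW$ and exhibit the enclosing cube also centered at $\zero$.

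Now fix $\bfx$ with $\sum_{i=1}^n|x_i|\leq R$ and write $\bfx=\bfc_x+\bfs_x$ uniquely with $\bfc_x\in\Lambda^\cW$ and $\bfs_x\in\dS$. The coset-leader property gives $\sum_{i=1}^n|s_{x,i}|\leq\rho$, so in particular $|s_{x,i}|\leq\rho$ for every coordinate $i$, and by the triangle inequality
$$\sum_{i=1}^n|c_{x,i}|\leq\sum_{i=1}^n|x_i|+\sum_{i=1}^n|s_{x,i}|\leq R+\rho.$$
Since every entry of $\cW$ lies in $\{-1,0,1\}$ and Theorem~\ref{thm:kernel_transW} guarantees $T^\cW(\bfc_x)\in\Lambda^\cW\subset\Z^n$, the real bound
$$|T^\cW(\bfc_x)_i|=\frac{1}{D}\left|\sum_{j=1}^n\omega^{(i)}_j\,c_{x,j}\right|\leq\frac{R+\rho}{D}$$
tightens to $|T^\cW(\bfc_x)_i|\leq\lfloor(R+\rho)/D\rfloor$ because the left-hand side is an integer. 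Adding the coordinate bound on $\bfs_x$ gives $|\tilde{T}^\cW(\bfx)_i|\leq\lfloor(R+\rho)/D\rfloor+\rho$ for every $i$, which is exactly what it means for the image of the Lee sphere to sit inside the claimed $n$-dimensional cube.

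The step requiring the most care is the WLOG reduction to a lattice-point center. Only $\Lambda^\cW$-translations commute with $\tilde{T}^\cW$, so for a Lee sphere centered at an arbitrary $\bfb\in\Z^n$ one first decomposes $\bfb=\bfc_b+\bfs_b$, translates by $-\bfc_b\in\Lambda^\cW$ to push the center into $\dS$, and re-centers the cube at $\tilde{T}^\cW(\bfs_b)=\bfs_b$ before running the same coordinate-wise estimate. The key subtlety throughout is that the two pieces $T^\cW(\bfc_x)$ and $\bfs_x$ transform on very different scales under the map $\bfx\mapsto\cW\bfx/D$, and preserving this separation is what produces the asymmetric bound $\lfloor(R+\rho)/D\rfloor+\rho$ instead of a cruder single-scale estimate such as $\lfloor(R+2\rho)/D\rfloor$ or $(R+\rho)/D+2\rho$.
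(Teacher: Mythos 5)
Your core estimate is correct and, for a Lee sphere centered at a point of $\Lambda^\cW$, it actually justifies the floor more cleanly than the paper does: you use $T^\cW(\bfc_x)\in\Lambda^\cW\subset\Z^n$ to replace the real bound $(R+\rho)/D$ on an integer quantity by $\left\lfloor (R+\rho)/D\right\rfloor$. The paper, however, never reduces to a special center. It takes an arbitrary center $\bfc\in\Z^n$, writes each image point as
$$
{\bf \tilde{T}}^\cW(\bfc+\bfx)\;=\;\frac{\cW\bfc}{D}\;+\;\frac{\cW\bigl(\bfx-s(\bfc+\bfx)\bigr)}{D}\;+\;s(\bfc+\bfx),
$$
and centers the enclosing cube at the (generally non-integer) point $\cW\bfc/D$; since the Manhattan weight of $\bfx-s(\bfc+\bfx)$ is at most $R+\rho$, the deviation in each coordinate is at most $(R+\rho)/D+\rho$. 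The coset decomposition of the center itself never enters.

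The gap in your argument is exactly the step you flag as delicate: the reduction to an arbitrary center does not deliver the stated bound. After translating by $-\bfc_b$ the sphere is centered at $\bfs_b\in\dS$, not at $\zero$, and ``running the same coordinate-wise estimate'' on a point $\bfs_b+\bfx$ with decomposition $\bfs_b+\bfx=\bfc'+\bfs'$ gives only $\sum_{j}|c'_j|\le\sum_j|s_{b,j}|+\sum_j|x_j|+\sum_j|s'_j|\le R+2\rho$, hence $|T^\cW(\bfc')_i|\le\left\lfloor (R+2\rho)/D\right\rfloor$; and re-centering the cube at ${\bf \tilde{T}}^\cW(\bfs_b)=\bfs_b$ costs a further $|s'_i-s_{b,i}|\le 2\rho$ rather than $\rho$. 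The cube you actually obtain has side $2\left\lfloor (R+2\rho)/D\right\rfloor+4\rho+1$, strictly larger than claimed whenever $\rho>0$ and the original center is not in $\Lambda^\cW$. To recover the theorem you must give up the integer re-centering and instead measure deviations from the real point $\cW\bfb/D$ (equivalently $\cW\bfs_b/D$ after your lattice translation), as the paper does, so that the only coset leader appearing is that of the point itself, paired with $\bfx$ inside the matrix product as $\cW(\bfx-\bfs')/D$; note that this costs you your clean integrality argument for the floor, since $\cW(\bfx-\bfs')/D$ need not be an integer vector.
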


\begin{proof}
A Lee sphere with radius $R$ and center
$\bfc = (c_1,\ldots,c_n)^t \in \mathbb{Z}^n$, $S_{n,R} ( \bfc )$,
is the following set of points
$$
S_{n,R} (\bfc) =\left\{\bfc+ (x_1,\ldots,x_n)^t
\ \Big| \ \sum_{i=1}^n|x_i| \leq R \right\}~,
$$
where $\bfx = (x_1,\ldots,x_n)^t \in \mathbb{Z}^n$.
For each $\bfc+\bfx \in\mathbb{Z}^n$ let
$s(\bfc+\bfx) \in \dS$,
be the coset leader in
the coset of $\bfc+\bfx$ with respect to the lattice $\Lambda^\cW$.
The set $S_{n,R} (\bfc)$ is transformed after the
transformation ${\bf \tilde{T}}^\cW$ into the following set
\begin{align*}
& {\bf \tilde{T}}^\cW (S_{n,R}(\bfc)) = \left\{{\bf \tilde{T}}^\cW
(\bfc+\bfx) \ \Big| \ \sum_{i=1}^n |x_i| \leq R \right\} & \\
& = \left\{T^\cW (\bfc+\bfx-s(\bfc+\bfx))+s(\bfc+\bfx)  \ \Big| \  \sum_{i=1}^n | x_i | \leq R \right\} & \\
& = \left\{\frac{\cW \bfc}{D}+\frac{\cW (\bfx -s(\bfc+\bfx))}{D} + s(\bfc+\bfx) \ \Big| \ \sum_{i=1}^n | x_i |\leq R \right\}
&
\end{align*}
Let $\frac{\cW (\bfx - s(\bfc+\bfx))}{D} + s(\bfc+\bfx) = \bfy = (y_1,\ldots,y_n)$.
Since the covering radius of the lattice $\Lambda^\cW$, defined in
Theorem~\ref{thm:kernel_transW},
is~$\rho$, it follows that $|s(\bfc+\bfx)| \leq \rho$. Then, for $1\leq i\leq n$,
\begin{align*}
& |y_i| =\frac{\sum_{j=1}^n h_{i,j}(x_j-s(\bfc+\bfx)_j)}{D} + s(\bfc+\bfx)_i & \\
& \leq \frac{\sum_{j=1}^n (| x_j | + |s(\bfc+\bfx)_j|)}{D} + s(\bfc+\bfx)_i \leq \left\lfloor \frac{R+\rho}{D} \right\rfloor +\rho .&
\end{align*}
Therefore, the set ${\bf \tilde{T}}^\cW (S_{n,R}(\bfc))$ can be located inside
a~$\left(2\left\lfloor\frac{ R+\rho }{D}\right\rfloor+2\rho+1\right)\times \cdots
\times \left(2\left\lfloor\frac{ R+\rho }{D}\right\rfloor+2\rho+1\right)$
discrete $n$-dimensional cube which contains the points of $\Z^n$
from the set
$$ \left\{\frac{\cW \bfc}{D} + \bfy \ \Big|
\ | y_i| \leq \left\lfloor \frac{R+\rho}{D} \right\rfloor
+\rho , ~ 1 \leq i \leq n~\right\}.$$
\end{proof}

The size of an $n$-dimensional Lee sphere
with radius $R$ is $\frac{(2R)^n}{n!} + O(R^{n-1})$,
when $n$ is fixed and $R\longrightarrow \infty$.
The size of the inscribing $n$-dimensional
cube is $\left(2 \left\lfloor \frac{ R+\rho }{D}
\right\rfloor +2\rho+1\right)^n$. Since the
covering radius $\rho$ of the code~$\C^\cW$ is a low degree
polynomial in $n$ (see the next paragraph), and $n$ is fixed, we get that for
$R\longrightarrow \infty$ the size of the inscribing
$n$-dimensional cube is $\frac{(2R)^n}{n^{n/2}} + O(R^{n-1})$.
Therefore, the size of the cube is greater roughly $\frac{n!}{n^{n/2}}$ times than the size
of the $n$-dimensional Lee sphere. This is a significant improvement with respect to the
transformation given in\cite{EtYa09}, where the $n$-dimensional Lee sphere is inscribed inside
an $n$-dimensional cube of size $\frac{(2R)^n}{2^{n-1}} + O(R^{n-1})$.

Generally, it is straightforward to show that the covering radius
of the lattice $\Lambda^\cW$,
where $\cW$ is a weighting matrix of order $n$ and weight $w=D^2$,
is at most $\frac{n \cdot \sqrt{w}}{4}$, but we believe it is
considerably smaller. If $\cW$ is $\cH_m$ then an analysis
of Theorem~\ref{thm:rmj} implies that the covering radius
of $\C^{\cH_m}$ is at most $n^{1.085}$. But, we mentioned that we believe that the
covering radius is much smaller, mainly since we think that the
bound of Theorem~\ref{thm:rmm} can be improved, while we conjecture
that the bound of Theorem~\ref{thm:rmj} is quite tight.

\section{On the Existence of Weighing Matrices}
\label{sec:exist}

We have discussed weighing matrices in this paper and especially
symmetric and skew symmetric ones. Their existence
and their properties, especially their covering radius,
are important to apply the results of the paper. Of special interest
are weighing matrices for which their weight is a square.
Finally, to apply Lemma~\ref{lem:Hinscribed} and Theorem~\ref{thm:Dinscribed}
effectively, the weight of the weighing matrix should be close to its order.

It is well
known~\cite{McSl77} that if a Hadamard matrix of order $n$ exists
then $n= 1,~2$ or $n \equiv 0 (\bmod~4)$. It is conjectured that
a Hadamard matrix of order $n$ exists for each $n$ divisible by~4. There are many
constructions for Hadamard matrices and they are known to exist
for many values in this range. The first value in this range for
which no Hadamard matrix is known yet is $n=668$~\cite{KhTa04}.

In the construction of the continuous transformation,
symmetric and skew symmetric matrices are required.
In the construction
of the discrete transformation we will need a symmetric weighing matrix
whose order is a square. There are several constructions
which yield symmetric Hadamard matrices.
The matrix $\cH_m$ of order $2^m$ is a symmetric matrix.
One of Paley's constructions for Hadamard matrices yields a symmetric
Hadamard matrix of order $2(q+1)$, for each $q$ which is a power
of a prime such that $q \equiv 1 (\bmod~4)$.
This construction covers several orders which are
squares and not powers of 2, such as $n=36$, 100, 196, and
484. But, other values such as $n=144$, 324, 400, and 576, are
not covered by this construction. Another construction is given
in~\cite{MuXi06} for all orders of the form $4m^4$, where
$m$ is odd. It covers a large set of values, for example $n=324$
is covered by this construction.

We now turn our attention to conference matrices.
It is well
known~\cite{CoDi07} that if a conference matrix of order $n$ exists
then $n \equiv 0 (\bmod~2)$. If $n \equiv 0 (\bmod~4)$
then the matrix can be made skew symmetric and
if $n \equiv 2 (\bmod~4)$ it can be made symmetric.
It is conjectured that
a conference matrix of order~$n$ exists for each~$n$ divisible by 4.
If $n \equiv 2 (\bmod~4)$ then a necessary condition for the existence
of a conference matrix of order~$n$ is that $n-1$
can be represented as a sum of two squares.
It is conjectured that this condition is also sufficient.
More information on orders of other conference matrices and weighing
matrices in general can be found in~\cite{CoDi07,GeSe79}.

In general, a weighting matrix of odd order~$n$ and weight~$w$
implies that~$w$ is a square. An infinite family in this context
are weighing matrices of order $q^2+q+1$ and weight $q^2$,
for each $q$ which is a power of a prime~\cite{SWW75}.

When the discrete transformation was discussed we have considered
the code $\C^\cW_{\sqrt{w}}$ for a given weighing matrix~$\cW$ with
weight~$w$.
We note that the code $\C^\cW_s$, $s < w$, defined
in Theorem~\ref{thm:kernel_transW} seems to be not interesting
when~$\cW$ is a weighing matrix of order $n$ and weight $w < n$.
The reason for this is that except for the fact that the code is reduced
to a code in the Lee metric over the alphabet $\Z_s$, the
code essentially equals to the code $\C^\cW_w$.
All the codewords of $\C^\cW_s$ are contained in
$\C^\cW_w$. Even so we have considered the code $\C^\cW_{\sqrt{w}}$,
for the discrete transformation, since the covering
radius of the code makes it still attractive for the discrete transformation.

\section{Conclusion and Problems for Future Research}
\label{sec:conclude}

We have considered the linear span of weighing matrices
as codes in the Lee and the Manhattan metrics. We have proved that the
minimum Lee distance of such a code is equal to the weight
of a row in the matrix. A set of codes related to Sylvester Hadamard matrices
were defined. Properties of these codes, such as their size, minimum
distance, and covering radius were explored.
We have defined a transformation which transforms any Lee sphere
in the space (also a conscribed cross-polytope in the continuous space) into a shape
with the same volume (in the continuous space)
located in a relatively small cube.
The transformation was defined as one of sequence
of transformations which yield a sequence of error-correcting
codes in the Lee metric. These codes are related
to the codes obtained from Sylvester type Hadamard matrices.
Many interesting questions arise from
our discussion, some of which were already mentioned.
The following questions summarize all of them.

\begin{enumerate}
\item What is the covering radius of the code
obtained from a Hadamard matrix?

\item Is the code of length $p+1$, $p$ prime, obtained
from a conference matrix of order $p+1$ is an MDS code?

\item Are the code $\C(m,j)$ and the related code $\C^{\cH_m}_{2^j}$,
equal?

\item Determine the size of the code $\C^\cH_s$ obtained from
a general Hadamard matrix $\cH$ of order $n$.

\item What is the covering radius of the code $\C^{\cH_m}_{2^j}$?

\item Is it possible to find a volume preserving transformation which transfers
each Lee sphere into a shape inscribed in a cube whose size is smaller than the
one given in our constructions? What about the same question for
a conscribed cross-polytope?

\item What is the minimum Lee distance and the covering radius
of $\C^\cW_s$, for a given weighing matrix $\cW$? The first
interesting cases are when
$\cW$ is a Hadamard matrix or a conference matrix.
\end{enumerate}

%
%

\end{document}